\documentclass[11pt]{article}

	\usepackage{enumerate}
	\usepackage{natbib} %comment out if you do not have the package
\usepackage{geometry}
\usepackage{fancyhdr}
\usepackage{afterpage}
\usepackage{graphicx}
\usepackage{tocloft}
\usepackage{float}
\usepackage{subcaption}
\usepackage{amsmath,amssymb,amsthm}
\usepackage{multicol}
\usepackage{mathtools}
\usepackage{dcolumn,array}
\usepackage{color}
\usepackage{multirow}
\usepackage{tikz}
\usetikzlibrary{arrows}
\usepackage{adjustbox}
\usepackage{makecell}
\usepackage{amsfonts}
\usepackage{placeins}
\usepackage{setspace}
\usepackage[font=small,labelfont={bf}]{caption}
\usepackage[ruled,vlined]{algorithm2e}
\usepackage{longtable}  % For tables that span multiple pages

\newcommand{\bla}{\color{black} }
\newcommand{\cT}{\mathcal{T}}
\newcommand{\cV}{\mathcal{V}}
\newcommand{\cW}{\mathcal{W}}
\newcommand{\Gigantic}[1]{\fontsize{#1}{#1-3}\selectfont}
\newtheorem{Proposition}{Proposition}
\newtheorem{Definition}{Definition}
\newtheorem{Theorem}{Theorem}

\newtheorem{Example}{Example}

\newcolumntype{"}{@{\hskip\tabcolsep\vrule width 2pt\hskip\tabcolsep}}
\SetKwInOut{KwIn}{Input}
\SetKwInOut{KwOut}{Output}
\SetKw{KwRet}{Return}
\newcounter{cases}
\newcounter{subcases}[cases]

\newtheorem{manualtheoreminner}{Theorem}
\newenvironment{manualtheorem}[1]{%
  \renewcommand\themanualtheoreminner{#1}%
  \manualtheoreminner
}{\endmanualtheoreminner}
\newtheorem{manualPropositioninner}{Proposition}
\newenvironment{manualProposition}[1]{%
  \renewcommand\themanualPropositioninner{#1}%
  \manualPropositioninner
}{\endmanualPropositioninner}

\newcommand{\lV}{\vspace{-1mm}}
\newcommand{\llV}{\vspace{-2mm}}
\newcommand{\lllV}{\vspace{-5mm}}

\newcommand{\mmV}{\vspace{2mm}}
\newcommand{\mmmV}{\vspace{5mm}}

\begin{document}
		
		%	%%%%%%%%%%%%%%%%%%%%%%%%%%%%%%%%%%%%%%%%%%%%%%%%%%%%%%%%%%%%%%%%%%%%%%%%%%%%%%
		\def\spacingset#1{\renewcommand{\baselinestretch}%
			{#1}\small\normalsize} \spacingset{1}
%		%%%%%%%%%%%%%%%%%%%%%%%%%%%%%%%%%%%%%%%%%%%%%%%%%%%%%%%%%%%%%%%%%%%%%%%%%%%%%%

			\title{\bf \emph{The Bidirectionality-Preserving Feedback Arc Set Problem with an Application to Faculty Hiring Network Analysis}}
			\author{Sina Akbari $^a$, Adolfo R. Escobedo $^b$, and Jorge A. Sefair $^c$ \\
			$^a$ School of Computing and Augmented Intelligence,\\ Arizona State University, Tempe, AZ, U.S.A\\
             $^b$ Edward P. Fitts Department of Industrial and Systems Engineering,\\ North Carolina State University, Raleigh, NC, U.S.A.\\
             $^c$ Department of Industrial and Systems Engineering,\\ University of Florida, Gainesville, FL, U.S.A}
			\date{}
			\maketitle

%		\if1\blind
%		{
%
%            \title{\bf \emph{The Bidirectionality-Preserving Feedback Arc Set Problem with an Application to Faculty Hiring Network Analysis}}
%			\author{Author information is purposely removed for double-blind review}
%			
%\bigskip
%			\bigskip
%			\bigskip
%			\begin{center}
%				{\LARGE\bf \emph{The Bidirectionality-Preserving Feedback Arc Set Problem with an Application to Faculty Hiring Network Analysis}}
%			\end{center}
%			\medskip
%		} \fi
%		%\bigskip
%		\lllV
	\begin{abstract}
This work introduces a novel combinatorial optimization problem to uncover strict and non-strict hierarchical structures underlying digraphs with bidirected arcs. We show that the problem includes the seminal feedback arc set problem as a special case, and that its solution extends a tournament solution known as Slater's rule. We introduce and compare two binary programming formulations and also introduce a polynomial-time search algorithm for obtaining an ordered hierarchy from their solution. The methodology is applied to the faculty hiring network problem using a dataset previously collected by Del Castillo et al. (2020) to  characterize the prestige/rank of Industrial/Systems/Operations Research (IEOR) departments using faculty hiring transactions.
	\end{abstract}
			
	\noindent%
	{\it Keywords:} \emph{feedback arc set; hiring networks; university rankings; tournament graphs}.

	%\newpage
	\setstretch{1.4} % DON'T change the spacing!

\section{Introduction}\label{sec: Introduction}
The \textit{feedback arc set problem (FASP)} is one of the most prominent problems in computer science and graph theory. Given a digraph, a \textit{feedback arc set} is a subset of arcs whose removal makes the resulting graph acyclic. The \textit{minimum feedback arc set (MFAS)} is the feedback arc set of minimum cardinality and cumulative weight for unweighted and weighted digraphs, respectively.  
FASP has wide-ranging applications across various fields including detecting deadlock in the design of operating systems \citep{bic1988logical}, finding favorable computation sequences in process flowsheet calculations in chemical engineering \citep{baharev2021exact}, and determining winners of a tournament in social choice theory \citep{slater1961inconsistencies}. 

FASP is interlinked with the \textit{linear ordering problem (LOP)} \cite{grotschel1985acyclic}. The two problems can be interpreted as seeking to uncover an underlying ordering of a set of given entities that is closest in a precise mathematical sense to a set of given orderings over the entities. In both cases, the inputs can be represented via a digraph, where the nodes represent the entities to be ordered, and the arcs summarize the given ordinal relationships. 

LOP seeks to order all of the given entities strictly; its inputs are also required to be linear orderings, which effectively implies that every pair of entities were compared the same number of times. In FASP, on the other hand, the inputs can come from more granular pairwise preferences, and it is not required for every pair to be compared directly or indirectly in the inputs. This makes FASP applicable to a wider array of situations where LOP is inadequate, including those when the given orderings are sparse and incomplete. It also implies that, unlike LOP, the FASP solution (i.e., the input digraph after the feedback arc set is removed), may not readily induce a linear ordering of the entities. However, topological sorting algorithms such as depth-first search (DFS) \citep{tarjan1972depth} can be efficiently deployed onto the resulting directed acyclic graph to find them.

A major drawback of FASP (and LOP) is that each subset of entities that is ordered must be done so strictly, and, as such, it is not possible to characterize another key hierarchical relationship of interest: equivalence classes. In other words, FASP hinders the ability to uncover subgroups of entities with equal standing among one another. Although the problem admits ties in the input graph---encoded via bidirectional arcs between pairs of entities---these non-strict ordering relationships are always broken through the solution process. This is a significant limitation that could cause unjustified ordinal relationships to be reported, for example, an entity being ordered ahead of others in its equivalence class. Such concerns can be circumvented in certain cases via the weak ordering problem (WOP), which seeks the total ordering of the entities that is closest to the given orderings, allowing for tied relationships in both the inputs and outputs \citep{esc23der}. However, analogous to the LOP, WOP requires every pair entities to be compared in the given orderings---i.e., the input digraph must be complete---which may be impractical to satisfy in various real-world settings.

As a case in point, consider graph-based methodologies for ranking university departments of specific academic disciplines, which are advocated as rigorous and transparent alternatives to ad hoc approaches employed by private institutions such as U.S. News and World Report (U.S.NEWS). These alternative approaches seek to analyze a \emph{faculty hiring network (FHN)}, which can be represented as a weighted digraph where each academic department is represented via a node, and the weight of each directed arc $i$ to $j$ is equal to the number of Ph.D. graduates that department $i$ has placed at as a tenure-track faculty member at department $j$. The FHN approach is built on the notion that the rank of an academic department can be derived from the placements of its graduates into tenure-track faculty positions at other departments, based on the authority and prestige that such transactions convey \citep{way2016gender}. 

Faculty hiring transactions among universities induce relatively sparse digraphs. In fact, recent studies expose that most newly hired faculty members are graduates from just a few institutions. For instance, in the U.S., only 19–28\% of universities educate 80\% of faculty members across various disciplines \citep{wapman2022quantifying}. Hiring transactions between departments can occur in both directions, meaning that the input digraph should admit bidirected arcs. Based on these characteristics, optimization approaches that rely on the solution to FASP have been devised to study these networks \cite{del2020exponential}. However, the aforementioned limitations of the problem preclude the characterization of hierarchies where certain departments have relatively indistinguishable differences in prestige. Such cases are relevant, for example, when there is a relative balance in the mutual hiring transactions between two or more departments. To tackle this and other settings where uncovering non-strict hierarchical relationships is of interest, we introduce a generalized version of FASP and a polynomial search algorithm for obtaining a full non-strict ordering from its solution. We make various additional fundamental contributions including establishing formal connections with Slater's Rule, a well-known tournament solution in social choice \citep{brandt2016handbook}, and deriving and comparing two mathematical formulations of the problem of interest. Finally, we illustrate the proposed methodology by studying the U.S. Industrial/Systems/Operations Research (IEOR) faculty hiring network \citep{del2020exponential}. 

The rest of this paper is organized as follows. Section \ref{Sec: Definitions} provides definitions and other preliminaries. Section \ref{Sec: Generalized Feedback Arc Set Problem} introduces the \emph{Bidirectionality-Preserving Feedback Arc Set Problem (BP-FASP)}, explores connections to other graph-theoretic concepts, and derives a topological search algorithm for extracting a total ordering from the problem's solution. In addition, it leverages BP-FASP to develop an extension of Slater's Rule to weak tournaments. Section \ref{Sec: Mathematical Formulations} introduces two mathematical formulations and compares their strengths via polyhedral analysis. Section \ref{Sec: Case Study} applies the proposed methodology to study the IEOR FHN, comparing the results with the original case study in \citet{del2020exponential}. Finally, Section \ref{Sec: Conclusion} concludes the paper.\llV

\section{Definitions}\label{Sec: Definitions}
Let $\mathcal{G} = (\mathcal{N}, \mathcal{A})$ be a weighted simple digraph, where $\mathcal{N}$ is a set of nodes and $\mathcal{A} \subseteq \mathcal{N} \times \mathcal{N}$ is a set of arcs; furthermore, let $\mathcal{W} \in [0, \infty)^{|\mathcal{A}|}$ be the set of arc weights. As a convention, let $(i, j)$ denote the arc directed from node $i$ to node $j$, and let $w_{ij} \in \mathcal{W}$ denote its weight. It is said that $\mathcal{G}$ is a simple digraph if it has no multiple edges or self-loops, and that $\mathcal{G}$ is a complete digraph if there is a bidirectional arc between every pair of nodes. Fig. \ref{Complete digraphs of size 1, 2, 3, and 4} depicts the complete digraphs of size 1, 2, 3, and 4.\llV
\begin{figure}[th]
\centering
\begin{tikzpicture}[scale=0.8]
\tikzstyle{every node}=[draw, shape=circle];
\path (0, 0cm) node (1) {$1$};
\tikzstyle{every node}=[draw, shape=circle];
\path (2, 0cm) node (2) {$1$};
\path (4.2, 0cm) node (1) {$2$};
\draw[line width=1pt,->]
(1) to [bend left=10] (2);
\draw[line width=1pt,->]
(2) to [bend left=10] (1);
\tikzstyle{every node}=[draw, shape=circle];
\path (6.2, 0cm) node (2) {$2$};
\path (8.4, 0cm) node (3) {$3$};
\path (7.3, 1.4cm) node (1) {$1$};
\draw[line width=1pt,->]
(1) to [bend left=10] (2);
\draw[line width=1pt,->]
(2) to [bend left=10] (1);
\draw[line width=1pt,->]
(2) to [bend left=10] (3);
\draw[line width=1pt,->]
(3) to [bend left=10] (2);
\draw[line width=1pt,->]
(3) to [bend left=10] (1);
\draw[line width=1pt,->]
(1) to [bend left=10] (3);
\tikzstyle{every node}=[draw, shape=circle];
\path (10.4, 0cm) node (2) {$2$};
\path (13.4, 0cm) node (4) {$4$};
\path (11.9, 1.4cm) node (1) {$1$};
\path (11.9, -1.4cm) node (3) {$3$};
\draw[line width=1pt,->]
(1) to [bend left=10] (2);
\draw[line width=1pt,->]
(2) to [bend left=10] (1);
\draw[line width=1pt,->]
(1) to [bend left=10] (4);
\draw[line width=1pt,->]
(2) to [bend left=10] (3);
\draw[line width=1pt,->]
(3) to [bend left=10] (2);
\draw[line width=1pt,->]
(3) to [bend left=10] (4);
\draw[line width=1pt,->]
(4) to [bend left=10] (3);
\draw[line width=1pt,->]
(4) to [bend left=10] (1);
\draw[line width=1pt,->]
(3) to [bend left=10] (1);
\draw[line width=1pt,->]
(1) to [bend left=10] (3);
\draw[line width=1pt,->]
(4) to [bend left=10] (2);
\draw[line width=1pt,->]
(2) to [bend left=10] (4);
\end{tikzpicture}
\caption{Complete digraphs of sizes 1, 2, 3, and 4}
\label{Complete digraphs of size 1, 2, 3, and 4}
\end{figure}

Let $\overline{\mathcal{G}} = (\overline{\mathcal{N}}, \overline{\mathcal{A}})$ denote a subgraph of $\mathcal{G}$ with $\overline{\mathcal{N}} \subseteq \mathcal{N}$ and $\overline{\mathcal{A}} \subseteq \mathcal{A}$. The subgraph is said to be \textit{induced} by $\overline{\mathcal{N}}$ if $\overline{\mathcal{A}}$ includes all of the edges of $\mathcal{A}$ with both endpoints in $\overline{\mathcal{N}}$. A (directed) \textit{path} is a sequence of arcs $\left((i_1, i_2), (i_2, i_3), \dots, (i_{\ell-1}, i_\ell)\right)$ that connects a sequence of nodes $i_1, i_2, i_3, \dots, i_{\ell-1}, i_\ell$. Let $i_1 \leadsto i_\ell$ denote a \textit{simple path} that starts from $i_1$ and ends at $i_\ell$. A path $i_1 \leadsto i_\ell$ forms a \textit{simple cycle} if all $i_v$ are distinct, for $2 \leq v \leq \ell$, and $i_1 = i_\ell$. $\mathcal{G}$ is a directed acyclic graph (DAG) if it contains no simple cycles. 

In social choice and decision theory, $i \succ j $ denotes that $i$ is strictly preferred to $j$; $i \approx j$ denotes that $i$ and $j$ are tied; and $i \succeq j$ denotes that $i$ is preferred over or tied with $j$, or equivalently that $j$ is not preferred over $i$ \cite{brandt2016handbook}. A binary relation $\boldsymbol{R} \subseteq \mathcal{K} \times \mathcal{K}$ is a \textit{linear ordering}---otherwise known as strict ordering---on some set $\mathcal{K}$ if it is transitive (if $i \succeq j$ and $j \succeq k$, then $i \succeq k$ for $i, j, k \in \mathcal{K}$), complete ($i \succeq j$ or $j \succeq i$ for $i \neq j \in \mathcal{K}$), and antisymmetric (if $i \succeq j$ and $j \succeq i$, then $i = j$ for all $i, j \in \mathcal{K}$). In more succinct terms, a linear ordering on $\mathcal{K}$ is a permutation of all the elements of $\mathcal{K}$. Moreover, $\boldsymbol{R}$ is a \textit{weak ordering}---otherwise known as non-strict ordering---on $\mathcal{K}$ if it is transitive and complete, but not necessarily antisymmetric. Linear orders are used to express strict preferences and weak orders are used to express non-strict preferences \cite{yoo2021new}; stated otherwise, weak orders are a generalization of linear orders where ties \emph{may} be present. Thus, every linear order is a weak order, but not every weak order is a linear order. For example, $\boldsymbol{\sigma^1} = \{b \succ d \succ a \succ c\}$ and $\boldsymbol{\sigma^2} = \{b \succ d \approx a \succ c\}$ are a linear ordering and weak ordering, respectively, on $\mathcal{K} := \{a, b, c, d\}$. Another useful representation of weak orderings is via \emph{ordered partitions}, where members of the same partition (i.e., equivalence class) are considered as being tied. For example, $\boldsymbol{\sigma^1}$ and $\boldsymbol{\sigma^2}$ also can be expressed as $ \{\{b\}, \{d\}, \{a\}, \{c\}\}$ and $\{\{b\}, \{d, a\}, \{c\}\}$, respectively, a representation alternatively known as \emph{bucket orders} (see \cite{fagin2006comparing} for more details). 

Since linear orders are transitive, LOP and FASP can be solved interchangeably. In a similar vein, this work leverages the transitivity of weak orders to solve a generalized version of FASP, whose  original definition is as follows.
\begin{Definition}\label{FASP}
(FASP) Let $\mathcal{G} = (\mathcal{N}, \mathcal{A})$ be a weighted simple digraph. A feedback arc set of $\mathcal{G}$, denoted as $\mathcal{A}' \subseteq \mathcal{A}$, is a subset of arcs whose removal makes the resulting subgraph $\mathcal{G'} = (\mathcal{N}, \mathcal{A}\backslash \mathcal{A}')$ acyclic (i.e., a DAG). The minimum feedback arc set of $\mathcal{G}$, abbreviated as MFAS$(\mathcal{G})$, has the minimum total arc weight among all feedback arc sets.
\end{Definition}
\begin{Example}
    Consider the unweighted digraph depicted in Fig. \ref{fig:sample_graph}; two cycles, $1 \leadsto 2 \leadsto 4  \leadsto 1$ and $2 \leadsto 3 \leadsto 2$,  are shown via green and blue arcs. Here, MFAS$(\mathcal{G}) = \{(4, 1), (3, 2)\}$, which is shown via red arcs in Fig. \ref{fig:mfas} (other solutions are omitted for simplicity). The resulting DAG after removing MFAS$(\mathcal{G})$ is shown in Fig. \ref{fig:resulting_graph}.
\begin{figure}[th]
    \centering
    \resizebox{0.9\textwidth}{!}{
    \begin{minipage}{\textwidth}
    \begin{tabular}{ccc}
        \begin{tabular}{@{}c@{}}
            \begin{subfigure}{0.33\textwidth}
                \centering
                \begin{tikzpicture}
                    \tikzstyle{every node}=[draw, shape=circle];
                    \path (10.4, 0cm) node (2) {$2$};
                    \path (13.4, 0cm) node (3) {$3$};
                    \path (11.9, 1.4cm) node (1) {$1$};
                    \path (11.9, -1.4cm) node (4) {$4$};
                    \draw[line width=1pt, green, ->] (1) to (2);
                    \draw[line width=1pt, green, ->] (4) to (1);
                    \draw[line width=1pt, blue, ->] (2) to [bend left=10] (3);
                    \draw[line width=1pt, blue, ->] (3) to [bend left=10] (2);
                    \draw[line width=1pt, ->] (1) to (3);
                    \draw[line width=1pt, green, ->] (2) to (4);
                \end{tikzpicture}
                \caption{}
                \label{fig:sample_graph}
            \end{subfigure}
        \end{tabular}
        &
        \begin{tabular}{@{}c@{}}
            \begin{subfigure}{0.33\textwidth}
                \centering
                \begin{tikzpicture}
                    \tikzstyle{every node}=[draw, shape=circle];
                    \path (10.4, 0cm) node (2) {$2$};
                    \path (13.4, 0cm) node (3) {$3$};
                    \path (11.9, 1.4cm) node (1) {$1$};
                    \path (11.9, -1.4cm) node (4) {$4$};
                    \draw[line width=1pt, ->] (1) to (2);
                    \draw[line width=1pt, ->] (2) to [bend left=10] (3);
                    \draw[line width=1pt, red, ->] (3) to [bend left=10] (2);
                    \draw[line width=1pt, ->] (1) to (3);
                    \draw[line width=1pt, ->] (2) to (4);
                    \draw[line width=1pt, red, ->] (4) to (1);
                \end{tikzpicture}
                \caption{}
                \label{fig:mfas}
            \end{subfigure}
        \end{tabular}
        & 
        \begin{tabular}{@{}c@{}}
            \begin{subfigure}{0.33\textwidth}
                \centering
                \begin{tikzpicture}
                    \tikzstyle{every node}=[draw, shape=circle];
                    \path (10.4, 0cm) node (2) {$2$};
                    \path (13.4, 0cm) node (3) {$3$};
                    \path (11.9, 1.4cm) node (1) {$1$};
                    \path (11.9, -1.4cm) node (4) {$4$};
                    \draw[line width=1pt, ->] (1) to (2);
                    \draw[line width=1pt, ->] (2) to (3);
                    \draw[line width=1pt, ->] (1) to (3);
                    \draw[line width=1pt, ->] (2) to (4);
                \end{tikzpicture}
                \caption{}
                \label{fig:resulting_graph}
            \end{subfigure}
        \end{tabular}
    \end{tabular}
    \caption{A sample graph and its MFAS, $\{(4, 1), (3, 2)\}$, shown in red arcs.}
    \label{fig:main_graph}
\end{minipage}
}
\end{figure}\llV

\end{Example}

To introduce the proposed generalization of FASP, we restate the definitions of unicycles and related concepts \cite{yoo2021new} and of an acyclic $K$-way partition \cite{herrmann2019multilevel}. A \textit{unicycle} is a simple path that starts and ends on the same node in one direction but not in the reverse direction. A \textit{unicyclic graph} contains at least one unicycle, and a \textit{unicycle-free graph} is devoid of such structures. Figs. \ref{Unicyclic graphs of size 3} and \ref{Unicycle-free graphs of size 3} depict the unicyclic and unicycle-free graphs of size three, respectively. Unicycle-free graphs are a generalization of DAGs: all DAGs are unicycle-free graphs, but not all unicycle-free graphs are DAGs. Specifically, DAGs admit only the trivially unicycle-free graphs of size three, which are shown in Appendix \ref{Appendix Trivial unicycle-free graphs of size 3}, and the unicycle-free graph depicted in Fig. \ref{fig: a} (and their isomorphisms); they exclude the other types of unicycle-free graph depicted in Figs. \ref{fig: b}, \ref{fig: c}, and \ref{fig: d}. 

\begin{Definition}\label{k-way}
(Acyclic $K$-Way Partition) Let $\mathcal{G} = (\mathcal{N}, \mathcal{A})$ be a simple digraph. A partition $\boldsymbol{\mathcal{N}} = \{\mathcal{N}_{1}, \mathcal{N}_{2},..., \mathcal{N}_{K}\}$ of $\mathcal{N}$ is called an acyclic $K$-way partition if at most one of the paths $i \leadsto j$ and $j' \leadsto i'$ exists, for $i, i' \in \mathcal{N}_u, \hspace{2pt} j, j' \in \mathcal{N}_v$, where $1 \leq u \not = v \leq K$.
\end{Definition}
\begin{Example}
Fig. \ref{a sample DAG and two partitions of its nodes} depicts two partitions of a DAG: $\boldsymbol{\mathcal{N}^1} = \{\mathcal{N}_{1}^{1} = \{1, 2\}, \mathcal{N}_{2}^{1} = \{3, 4\}\}$ and $\boldsymbol{\mathcal{N}^2} = \{\mathcal{N}_{1}^{2} = \{1, 3\}, \mathcal{N}_{2}^{2} = \{2, 4\}\}$. $\boldsymbol{\mathcal{N}^1}$ (center subfigure) is not an acyclic 2-way partition since the subsets are connected in both directions, namely via paths $1 \leadsto 3$ and $4 \leadsto 2$; on the other hand, $\boldsymbol{\mathcal{N}^2}$ (right subfigure) is an acyclic 2-way partition.
\end{Example}
\begin{figure}[H]
\centering
\begin{subfigure}{0.31\textwidth}
\begin{tikzpicture}
\tikzstyle{every node}=[draw, shape=circle];
\path (0, 0cm) node (j) {$j$};
\path (2.2, 0cm) node (k) {$k$};
\path (1.1, 1.6cm) node (i) {$i$};
\draw[line width=1pt,->] 
(i) to (j);
\draw[line width=1pt,->] 
(j) to (k);
\draw[line width=1pt,->]
(k) to (i);
\end{tikzpicture}\caption{}
\label{fig: f}
\end{subfigure}
\begin{subfigure}{0.31\textwidth}
\begin{tikzpicture}
\tikzstyle{every node}=[draw, shape=circle];
\path (4, 0cm) node (j) {$j$};
\path (6.2, 0cm) node (k) {$k$};
\path (5.1, 1.6cm) node (i) {$i$};
\draw[line width=1pt,->] 
(j) to [bend left=10] (k);
\draw[line width=1pt,->] 
(k) to [bend left=10] (j);
\draw[line width=1pt,->]
(i) to (j);
\draw[line width=1pt,->]
(k) to (i);
\end{tikzpicture}
\caption{}
\label{fig: g}
\end{subfigure}
\begin{subfigure}{0.31\textwidth}
\begin{tikzpicture}
\tikzstyle{every node}=[draw, shape=circle];
\path (4, 0cm) node (j) {$j$};
\path (6.2, 0cm) node (k) {$k$};
\path (5.1, 1.6cm) node (i) {$i$};
\draw[line width=1pt,->] 
(j) to [bend left=10] (k);
\draw[line width=1pt,->] 
(k) to [bend left=10] (j);
\draw[line width=1pt,->] 
(i) to [bend left=10] (k);
\draw[line width=1pt,->] 
(k) to [bend left=10] (i);
\draw[line width=1pt,->]
(i) to (j);
\end{tikzpicture}
\caption{}
\label{fig: h}
\end{subfigure}
\label{fig: Uni-cyclic}
\caption{Unicyclic graphs of size three \citep{yoo2021new}}
\label{Unicyclic graphs of size 3}
\end{figure}
\begin{figure}[H]
\centering
\begin{subfigure}{0.225\textwidth}
\begin{tikzpicture}
\tikzstyle{every node}=[draw, shape=circle];
\path (0, 0cm) node (j) {$j$};
\path (2.2, 0cm) node (k) {$k$};
\path (1.1, 1.6cm) node (i) {$i$};
\draw[line width=1pt,->] 
(j) to (k);
\draw[line width=1pt,->] 
(i) to (j);
\draw[line width=1pt,->]
(i) to (k);
\end{tikzpicture}
\caption{}
\label{fig: a}
\end{subfigure}\hspace{3pt}
\begin{subfigure}{0.225\textwidth}
\begin{tikzpicture}
\tikzstyle{every node}=[draw, shape=circle];
\path (4, 0cm) node (j) {$j$};
\path (6.2, 0cm) node (k) {$k$};
\path (5.1, 1.6cm) node (i) {$i$};
\draw[line width=1pt,->] 
(j) to [bend left=10] (k);
\draw[line width=1pt,->] 
(k) to [bend left=10] (j);
\draw[line width=1pt,->]
(i) to (j);
\draw[line width=1pt,->]
(i) to (k);
\end{tikzpicture}
\caption{}
\label{fig: b}
\end{subfigure}\hspace{3pt}
\begin{subfigure}{0.225\textwidth}
\begin{tikzpicture}
\tikzstyle{every node}=[draw, shape=circle];
\path (4, 0cm) node (j) {$j$};
\path (6.2, 0cm) node (k) {$k$};
\path (5.1, 1.6cm) node (i) {$i$};
\draw[line width=1pt,->] 
(j) to [bend left=10] (k);
\draw[line width=1pt,->] 
(k) to [bend left=10] (j);
\draw[line width=1pt,->]
(j) to (i);
\draw[line width=1pt,->]
(k) to (i);
\end{tikzpicture}
\caption{}
\label{fig: c}
\end{subfigure}\hspace{3pt}
\begin{subfigure}{0.225\textwidth}
\begin{tikzpicture}
\tikzstyle{every node}=[draw, shape=circle];
\path (4, 0cm) node (j) {$j$};
\path (6.2, 0cm) node (k) {$k$};
\path (5.1, 1.6cm) node (i) {$i$};
\draw[line width=1pt,->] 
(j) to [bend left=10] (k);
\draw[line width=1pt,->] 
(k) to [bend left=10] (j);
\draw[line width=1pt,->]
(i) to [bend left=10] (j);
\draw[line width=1pt,->]
(i) to [bend left=10] (k);
\draw[line width=1pt,->]
(j) to [bend left=10] (i);
\draw[line width=1pt,->]
(k) to [bend left=10] (i);
\end{tikzpicture}
\caption{}
\label{fig: d}
\end{subfigure}
\caption{Unicycle-free graphs of size three \citep{yoo2021new}}
\label{Unicycle-free graphs of size 3}
\end{figure}
\begin{figure}[H]
\begin{small}
\centering
\hfill
\begin{subfigure}[b]{0.25\textwidth}
\begin{tikzpicture}
\tikzstyle{every node}=[draw, shape=circle];
\path (0, 0cm) node (1) {$1$};
\path (2, 0cm) node (2) {$2$};
\path (0, -2cm) node (3) {$3$};
\path (2, -2cm) node (4) {$4$};
\draw[line width=1pt,->] 
(1) to (3);
\draw[line width=1pt,->] 
(1) to (2);
\draw[line width=1pt,->]
(3) to (4);
\draw[line width=1pt,->]
(4) to (2);
\end{tikzpicture}
\end{subfigure}
\hfill
\begin{subfigure}[b]{0.25\textwidth}
\begin{tikzpicture}
\draw[blue] (1, 0cm) ellipse (1.55cm and 0.7cm);
\draw[blue] (1, -2cm) ellipse (1.55cm and 0.7cm);
\tikzstyle{every node}=[draw, shape=circle];
\path (0, 0cm) node (1) {$1$};
\path (2, 0cm) node (2) {$2$};
\path (0, -2cm) node (3) {$3$};
\path (2, -2cm) node (4) {$4$};
\draw[line width=1pt,green,->] 
(1) to (3);
\draw[line width=1pt,->] 
(1) to (2);
\draw[line width=1pt,->]
(3) to (4);
\draw[line width=1pt,red,->]
(4) to (2);
\end{tikzpicture}
\end{subfigure}
\hfill
\begin{subfigure}[b]{0.25\textwidth}
\begin{tikzpicture}
\draw[blue] (0, -1cm) ellipse (0.7cm and 1.55cm);
\draw[blue] (2, -1cm) ellipse (0.7cm and 1.55cm);
\tikzstyle{every node}=[draw, shape=circle];
\path (0, 0cm) node (1) {$1$};
\path (2, 0cm) node (2) {$2$};
\path (0, -2cm) node (3) {$3$};
\path (2, -2cm) node (4) {$4$};
\draw[line width=1pt,->] 
(1) to (3);
\draw[line width=1pt,green,->] 
(1) to (2);
\draw[line width=1pt,green,->]
(3) to (4);
\draw[line width=1pt,->]
(4) to (2);
\end{tikzpicture}
\label{fig: K-way resulting}
\end{subfigure}
\caption{A sample DAG and two partitions of its nodes}\llV
\label{a sample DAG and two partitions of its nodes}
\end{small}
\end{figure}

\section{A Generalization of the Feedback Arc Set Problem}\label{Sec: Generalized Feedback Arc Set Problem}\lV
This section introduces a new generalization of FASP, where unlike the traditional FASP, bidirectional arcs can be preserved. In particular, the traditional FASP can retain only one of the four types of unicycle-free subgraphs depicted in Fig. \ref{Unicycle-free graphs of size 3}, namely the one depicted in Fig. \ref{fig: a}, while the other three types, depicted in Figs. \ref{fig: b}-\ref{fig: d}, are forced to be eliminated by removing certain arcs, e.g., by removing either arc $(j,k)$ or arc $(k,j)$ in Fig. \ref{fig: b}. However, in the problem introduced herein, all types of of unicycle-free subgraphs can be retained: They may remain unchanged when keeping such cycles can help reflect the lack of a strict domination hierarchy, or they may be eliminated when doing so minimizes the weight of the removed arcs. However, as with the traditional FASP, all unicycles are forced to be eliminated. To solve the problem of interest, Section \ref{Weak Orderings and Unicycle-free Graphs Mapping} first reviews the relationship between DAGs and linear orders; then, it extends this relationship to the case of unicycle-free graphs and weak orderings. Furthermore, Section \ref{Bidirectional Feedback Arc Set Problem (BP-FASP)} formally defines BP-FASP, and Section \ref{Slater's Rule} establishes connections between BP-FASP and a concept from social choice theory known as Slater's rule.

\subsection{Mapping Weak Orderings and Unicycle-free Graphs}\label{Weak Orderings and Unicycle-free Graphs Mapping}
Every DAG can be associated with at least one linear ordering via a topological sorting algorithm. The associated linear ordering is guaranteed to be unique when there is full information on all pairwise relations, that is, when there is exactly one arc and/or directed path between each pair of nodes, as they determine a strict pairwise preference in the associated ordering; one example of such DAGs is a tournament graph (see Section Section \ref{Slater's Rule}). Consider the DAG shown in Fig. \ref{fig:resulting_graph}, which can be associated with the linear orderings $\boldsymbol{\sigma}^1 := \{1 \succ 2 \succ 3\succ 4\}$ and $\boldsymbol{\sigma}^2 := \{1 \succ 2 \succ 4\succ 3\}$. The graph admits both linear orderings because there is no directed arc or directed path between node-pair (3, 4). In general, a linear ordering $\boldsymbol{\sigma}$ can be associated with a DAG $\mathcal{G}$ as long as there are no directed arcs or directed paths in $\mathcal{G}$ from a lower-ranked node in $\boldsymbol{\sigma}$ to a higher-ranked node, that is, the ordering must not contradict the directionality of the arcs and simple paths of $\mathcal{G}$.
\begin{algorithm}[t]
\small
\SetKwInOut{Input}{Input}\SetKwInOut{Output}{Output}\SetKw{KwRet}{return}
\Input{unicycle-free graph $\mathcal{G} = (\mathcal{N}, \mathcal{A})$}
\Output{weak ordering $\boldsymbol{\sigma}$ associated with $\mathcal{G}$ (in its ordered partition representation)}
\nl $visited[v] = False \hspace{8pt} \forall v \in \mathcal{N}$\;
\nl Initialize the ordered partition representation: $\boldsymbol{\sigma} = \{\}$\;

\nl \While{$\exists v\in \mathcal{N}$ s.t. $visited[v] = False$}{
\nl select an arbitrary node $v$ where $visited[v] = False$\;
\nl Visit($v$)}
\nl $\boldsymbol{return} \boldsymbol{\sigma}$\;

\SetKwFunction{FSum}{Visit}
\SetKwProg{Fn}{Subroutine}{:}{}
\Fn{\FSum{$v$}}{
    \If{$visited[v] = True$}{\hspace{22pt} \KwRet{}}
    Let $\boldsymbol{\Xi} := \{u \in \mathcal{N}| (v, u) \in \mathcal{A}, \hspace{3pt} (u, v) \in \mathcal{A}\} \cup \{v\}$ be the subset of nodes that share a bidirectional arc with $v$, including $v$ itself\;
    % $\boldsymbol{\Xi} := \boldsymbol{\Xi} \cup \{v\}$\;
    Let $\boldsymbol{\Omega} := \{k \in \mathcal{N}| k \notin \boldsymbol{\Xi}, u \in \boldsymbol{\Xi}, \hspace{3pt}  (u, k) \in \mathcal{A}\}$ be the set of nodes with an incoming arc from any node in $\boldsymbol{\Xi}$\;
    \For{each node $k \in \Omega$}{Visit($k$)}
    \For{each node $u \in \Xi$}{$visited[u] = True$}

    % $visited[v] = True$\;
    Append $\boldsymbol{\Xi}$ to head of $\boldsymbol{\sigma}$\;}
\caption{Modified Topological Sorting Algorithm}
\label{Topological Sorting}
\end{algorithm}

Next, we present a modified topological sorting algorithm for associating a weak ordering with a given unicycle-free graph. The algorithm is an extension of DFS, and its pseudocode is provided in Algorithm \ref{Topological Sorting}. The algorithm starts by initializing each nodes as unvisited and the weak ordering as empty. While there are still unvisited nodes, the algorithm selects one such node $v$ at random, and then it invokes a recursive subroutine \emph{Visit($v$)}, which first checks whether $v$ has been visited. If true, it terminates the function; otherwise, it defines $\boldsymbol{\Xi}$ as the set of nodes that share a bidirectional arc with $v$, including itself. All nodes in $\boldsymbol{\Xi}$ will be tied together in the associated weak ordering. Then, it defines $\boldsymbol{\Omega}$ to be the set of all nodes not in $\boldsymbol{\Xi}$ that receive a directed arc from any node in $\boldsymbol{\Xi}$. That is, $\boldsymbol{\Omega}$ is the subset of nodes that will be ranked after $\boldsymbol{\Xi}$. In the next step, the algorithm invokes $Visit(k)$ for each $k\in\boldsymbol{\Omega}$. Once all nodes in $\boldsymbol{\Omega}$ have been visited, it marks all nodes in $\boldsymbol{\Xi}$ as visited as well, and it adds $\boldsymbol{\Xi}$ to the head of the ordered partition of the weak ordering. The algorithm terminates when all nodes have been visited. Similar to the original DFS topological sorting algorithm, Algorithm \ref{Topological Sorting} has a time complexity of $O\left(|V| + |A|\right)$. Note that the algorithm generates only one possible weak ordering for the given graph, even if multiple options exist. The random selection of an unvisited node in line four of Algorithm \ref{Topological Sorting} can yield different orderings in different runs. However, it can be de-randomized if all solutions are desired.

Consider the unicycle-free graph depicted in Fig. \ref{A sample unicycle-free graph with }. The following set of weak orderings can be associated with it through multiple runs of Algorithm \ref{Topological Sorting}: $\{\boldsymbol{\sigma^1} = \{2 \succ 4 \approx 5 \succ 3 \succ 1\}, \boldsymbol{\sigma^2} = \{2 \succ 4 \approx 5 \succ 1 \succ 3\}, \boldsymbol{\sigma^3} = \{2 \succ 1\succ 4 \approx 5 \succ 3\}\}$. The three weak orderings are consistent with the arcs present in the graph (the variation is due to the missing arcs/paths between certain nodes). In general, a weak ordering $\boldsymbol{\sigma}$ can be associated with a unicycle-free graph $\mathcal{G}$ as long as 1) there are no directed arcs or directed paths in $\mathcal{G}$ from a lower-ranked node to a higher-ranked node in $\boldsymbol{\sigma}$, and 2) there is a bidirectional arc in $\mathcal{G}$ between every pair of nodes that are tied in $\boldsymbol{\sigma}$. Notice that, whenever the unicycle-free graph $\mathcal{G}$ is a DAG, bidirectional arcs are absent, and the second condition is trivially satisfied.
\begin{figure}[ht]
\centering
\begin{tikzpicture}[scale=0.8]
\tikzstyle{every node}=[draw, shape=circle];
\path (10.4, 0cm) node (4) {$4$};
\path (13.4, 0cm) node (5) {$5$};
\path (11.9, 1.4cm) node (2) {$2$};
\path (9.2, 1.4cm) node (1) {$1$};
\path (14.6, 1.4cm) node (3) {$3$};
\draw[line width=1pt,->]
(4) to [bend left=10] (5);
\draw[line width=1pt,->]
(5) to [bend left=10] (4);
\draw[line width=1pt,->]
(2) to (1);
\draw[line width=1pt,->]
(5) to (3);
\draw[line width=1pt,->]
(2) to (4);
\end{tikzpicture}
\caption{A sample unicycle-free graph that can be associated with the weak orderings  $\boldsymbol{\sigma^1} = \{2 \succ 4 \approx 5 \succ 3 \succ 1\}, \boldsymbol{\sigma^2} = \{2 \succ 4 \approx 5 \succ 1 \succ 3\}$, and $\boldsymbol{\sigma^3} = \{2 \succ 1\succ 4 \approx 5 \succ 3\}\}$}
\label{A sample unicycle-free graph with }
\end{figure}

\allowdisplaybreaks
\subsection{The Bidirectionality-Preserving Feedback Arc Set Problem}\label{Bidirectional Feedback Arc Set Problem (BP-FASP)}
\begin{Definition}
(BP-FASP) Let $\mathcal{G} = (\mathcal{N}, \mathcal{A})$ be a weighted simple digraph. A bidirectionality-preserving feedback arc set of $\mathcal{G}$, denoted as $\mathcal{A}' \subseteq \mathcal{A}$, is a subset of arcs whose removal makes the resulting subgraph $\mathcal{G'} = (\mathcal{N}, \mathcal{A}\backslash \mathcal{A}')$ unicycle-free. The minimum bidirectionality-preserving feedback arc set of $\mathcal{G}$, abbreviated as MBP-FAS$(\mathcal{G})$, has the minimum total weight among all  bidirectionality-preserving feedback arc sets.
\end{Definition}

At first glance, the definitions of BP-FASP and FASP appear indistinguishable; in fact, since DAGs are a special case of unicycle-free graphs, FASP is consequently a special case of BP-FASP. The main difference between them is in the structure of the resulting graph $\mathcal{G}'$ (and by connection, the contents of $\mathcal{A}'$)\bla. Namely, with FASP, $\mathcal{G}'$ cannot contain bidirected arcs, but with BP-FASP certain bidirected arcs may be preserved. In particular, after the arc set $\mathcal{A}'$ is removed, the vertices of $\mathcal{G}'$ should form an acyclic k-way partition with an additional restriction: For each vertex subset $\mathcal{N}_{t} \in \boldsymbol{\mathcal{N}}:= \{\mathcal{N}_{1}, \dots, \mathcal{N}_{K}\}$, the subgraph induced by $\mathcal{N}_{t}$ must be a complete digraph. Note that FASP does not have an analogous requirement, as it does not retain bidirectional arcs (i.e., one of the two arcs must be removed). A resulting  distinguishing feature of the introduced problem is that the \textit{node subsets} $\mathcal{N}_{1}, \dots, \mathcal{N}_{K}$, rather than the individual nodes, form an acyclic graph. The following example helps illustrate these differences.
\begin{Example}
Consider the unweighted bidirectional graph depicted in Fig. \ref{subfig 1}, which has multiple cycles, e.g., $1\leadsto 3\leadsto 4\leadsto 1$. Here, MBP-FASP$(\mathcal{G})$ is shown in red, and the resulting unicycle-free subgraph and its corresponding 5-way partition of the nodes after removing MBP-FASP$(\mathcal{G})$ is shown in Fig. \ref{subfig 2}.
%\begin{small}
\begin{figure}[ht]
\centering
\resizebox{0.8\textwidth}{!}{
\begin{minipage}{\textwidth}
\begin{subfigure}[b]{0.5\textwidth}
\begin{tikzpicture}
\tikzstyle{every node}=[draw, shape=circle];
\path (0, 0cm) node (8) {$8$};
\path (1.6, 0cm) node (6) {$6$};
\path (0.8, 1.4cm) node (7) {$7$};
\path (3.2, 0cm) node (5) {$5$};
\path (4.8, 0cm) node (3) {$3$};
\path (4, 1.4cm) node (4) {$4$};
\path (2.4, 2.8cm) node (2) {$2$};
\path (5, 2.8cm) node (1) {$1$};
\draw[line width=1pt,->] 
(8) to [bend left=10] (6);
\draw[line width=1pt,->] 
(6) to [bend left=10] (8);
\draw[line width=1pt,->]
(7) to [bend left=10] (8);
\draw[line width=1pt,->]
(8) to [bend left=10] (7);
\draw[line width=1pt,->]
(6) to [bend left=10] (7);
\draw[line width=1pt,->]
(7) to [bend left=10] (6);
\draw[line width=1pt,red,->]
(3) to [bend left=10] (4);
\draw[line width=1pt,->]
(4) to [bend left=10] (3);
\draw[line width=1pt,->]
(3) to [bend left=10] (5);
\draw[line width=1pt,->]
(5) to [bend left=10] (3);
\draw[line width=1pt,->]
(4) to (5);
\draw[line width=1pt,red,->]
(6) to (4);
\draw[line width=1pt,->]
(1) to (3);
\draw[line width=1pt,->]
(4) to (1);
\draw[line width=1pt,->]
(1) to [bend left=10] (2);
\draw[line width=1pt,->]
(2) to [bend left=10] (1);
\draw[line width=1pt,->]
(2) to (6);
\draw[line width=1pt,->]
(2) to (7);
\draw[line width=1pt,red,->]
(5) to [bend left=10] (6);
\draw[line width=1pt,->]
(6) to [bend left=10] (5);
\end{tikzpicture}
\caption{}
\label{subfig 1}
\end{subfigure}
\hspace{15pt}
\begin{subfigure}[b]{0.5\textwidth}
\begin{tikzpicture}
\draw[blue] (0.8, 0.5cm) circle [radius=1.45];
\draw[blue] (2.4, 2.8cm) circle [radius=0.6];
\draw[blue] (5, 2.8cm) circle [radius=0.6];
\draw[blue] (4, 1.4cm) circle [radius=0.6];
\draw[blue] (4, 0cm) ellipse (1.5cm and 0.65cm);
\tikzstyle{every node}=[draw, shape=circle];
\path (0, 0cm) node (8) {$8$};
\path (1.6, 0cm) node (6) {$6$};
\path (0.8, 1.4cm) node (7) {$7$};
\path (3.2, 0cm) node (5) {$5$};
\path (4.8, 0cm) node (3) {$3$};
\path (4, 1.4cm) node (4) {$4$};
\path (2.4, 2.8cm) node (2) {$2$};
\path (5, 2.8cm) node (1) {$1$};
\draw[line width=1pt,->] 
(8) to [bend left=10] (6);
\draw[line width=1pt,->] 
(6) to [bend left=10] (8);
\draw[line width=1pt,->]
(7) to [bend left=10] (8);
\draw[line width=1pt,->]
(8) to [bend left=10] (7);
\draw[line width=1pt,->]
(6) to [bend left=10] (7);
\draw[line width=1pt,->]
(7) to [bend left=10] (6);
\draw[line width=1pt,->]
(4) to [bend left=10] (3);
\draw[line width=1pt,->]
(3) to [bend left=10] (5);
\draw[line width=1pt,->]
(5) to [bend left=10] (3);
\draw[line width=1pt,->]
(4) to (5);
\draw[line width=1pt,->]
(1) to (3);
\draw[line width=1pt,->]
(4) to (1);
\draw[line width=1pt,->]
(1) to [bend left=10] (2);
\draw[line width=1pt,->]
(2) to [bend left=10] (1);
\draw[line width=1pt,->]
(2) to (6);
\draw[line width=1pt,->]
(2) to (7);
\draw[line width=1pt,->]
(6) to [bend left=10] (5);
\end{tikzpicture}
\caption{}
\label{subfig 2}
\end{subfigure}
\\\
\caption{The MBP-FASP is comprised of arcs (3, 4), (5, 6) and (6, 4) (marked in red)}
\label{An example of the MBP-FASP comprised of arcs (3, 4), (5, 6) and (6, 4)}
\end{minipage}
}
\end{figure}\llV
%\end{small}
\label{Ex:BP-FASP}
\end{Example}
Example \ref{Ex:BP-FASP} demonstrates that all unicycles of $\mathcal{G}$ must be eliminated. The ensuing example demonstrates that arcs from complete subgraphs may at times also be removed, when doing so minimizes the cumulative weight of $\mathcal{A}'$. 
\begin{Example}
Consider the unweighted bidirectional graph depicted in Fig. \ref{fig 2: An example of the BP-FASP comprised of arcs (6, 1) and (6, 2)}, which has multiple cycles, e.g., $2 \leadsto 4 \leadsto 6 \leadsto 2$. Here, MBP-FASP$(\mathcal{G}) = \{(6, 1), (6, 2)\}$ is shown in red, and the resulting unicycle-free subgraph and its corresponding 5-way partition of the nodes after removing MBP-FASP$(\mathcal{G})$ is shown in Fig. \ref{fig 1: An example of the BP-FASP comprised of arcs (6, 1) and (6, 2)}. 
\begin{figure}[ht]
\centering
\resizebox{0.73\textwidth}{!}{
\hspace{-20pt}
\begin{minipage}{\textwidth}
\begin{subfigure}[b]{0.5\textwidth}
\begin{tikzpicture}
\tikzstyle{every node}=[draw, shape=circle];
\path (-0.5, 0cm) node (1) {$1$};
\path (3.5, 0cm) node (2) {$2$};
\path (-1.5, -1.75cm) node (3) {$3$};
\path (4.5, -1.75cm) node (4) {$4$};
\path (1.5, -1.75cm) node (5) {$5$};
\path (1.5, -3.5cm) node (6) {$6$};
\draw[line width=1pt,->]
(1) to [bend left=10] (2);
\draw[line width=1pt,->]
(2) to [bend left=10] (1);
\draw[line width=1pt,->]
(1) to [bend left=10] (6);
\draw[line width=1pt,red,->]
(6) to [bend left=10] (1);
\draw[line width=1pt,->]
(1) to (3);
\draw[line width=1pt,->]
(1) to (4);
\draw[line width=1pt,->]
(1) to (5);
\draw[line width=1pt,->]
(2) to [bend left=10] (6);
\draw[line width=1pt,red,->]
(6) to [bend left=10] (2);
\draw[line width=1pt,->]
(2) to (3);
\draw[line width=1pt,->]
(2) to (4);
\draw[line width=1pt,->]
(2) to (5);
\draw[line width=1pt,->]
(3) to (6);
\draw[line width=1pt,->]
(4) to (6);
\draw[line width=1pt,->]
(5) to (6);
\end{tikzpicture}
\caption{}
\label{fig 2: An example of the BP-FASP comprised of arcs (6, 1) and (6, 2)}
\end{subfigure}
\hspace{30pt}
\begin{subfigure}[b]{0.5\textwidth}
\begin{tikzpicture}
\draw[blue] (-1.5, -1.75cm) circle [radius=0.6];
\draw[blue] (4.5, -1.75cm) circle [radius=0.6];
\draw[blue] (1.5, -1.75cm) circle [radius=0.6];
\draw[blue] (1.5, -3.5cm) circle [radius=0.6];
\draw[blue] (1.5, 0cm) ellipse (3cm and 0.65cm);
\tikzstyle{every node}=[draw, shape=circle];
\path (-0.5, 0cm) node (1) {$1$};
\path (3.5, 0cm) node (2) {$2$};
\path (-1.5, -1.75cm) node (3) {$3$};
\path (4.5, -1.75cm) node (4) {$4$};
\path (1.5, -1.75cm) node (5) {$5$};
\path (1.5, -3.5cm) node (6) {$6$};
\draw[line width=1pt,->]
(1) to [bend left=10] (2);
\draw[line width=1pt,->]
(2) to [bend left=10] (1);
\draw[line width=1pt,->]
(1) to (6);
\draw[line width=1pt,->]
(1) to (3);
\draw[line width=1pt,->]
(1) to (4);
\draw[line width=1pt,->]
(1) to (5);
\draw[line width=1pt,->]
(2) to (6);
\draw[line width=1pt,->]
(2) to (3);
\draw[line width=1pt,->]
(2) to (4);
\draw[line width=1pt,->]
(2) to (5);
\draw[line width=1pt,->]
(3) to (6);
\draw[line width=1pt,->]
(4) to (6);
\draw[line width=1pt,->]
(5) to (6);
\end{tikzpicture}
\caption{}
\label{fig 1: An example of the BP-FASP comprised of arcs (6, 1) and (6, 2)}
\end{subfigure}
\caption{The MBP-FASP is comprised of arcs (6, 1) and (6, 2) (marked in red)}
\label{An example of the BP-FASP comprised of arcs (6, 1) and (6, 2)}
\end{minipage}
}
\end{figure}
\end{Example}\lllV

\subsection{Slater's Rule Extension to Weak Tournaments}\label{Slater's Rule}
\noindent To elaborate on the connection between BP-FASP and Slater's Rule, we first introduce a few concepts from social choice theory. A \textit{tournament} $\cT=(\cV,\succeq)$ is defined as a complete and asymmetric binary relation $\succeq$ over a set of candidates $\cV$. Its elements are denoted equivalently as $(a,b)$ or $a\succeq b$, where $a,b\in\cV$---note that, because $(a,b)\in\cT$ implies $(b,a)\notin\cT$ (due to asymmetry), writing $a\succeq b$ implies $a\succ b$. Tournaments are essential across many fields including in social choice \citep{brandt2016handbook}, where they encapsulate the inputs required by a class of voting rules known as \textit{C1 social choice functions}. Although social choice functions generally assume that $p$ voters each provide a linear ordering over the set of candidates $\cV$, those belonging to the C1 class require only the majority relation conveyed through the collective votes between every pair of candidates, to determine a non-empty set of winners $S(\cT)\subseteq\cV$. Pairwise majority relations can be encoded with a tournament, therefore, in this context, $(a,b)\in\cT$ indicates that a strict majority of voters (i.e., $>p/2$) prefers $a$ over $b$. 

C1 functions, alternatively known as \textit{tournament solutions}, are differentiated by their axiomatic properties and the processes they apply to select $S(\cT)$ \citep{hudry2009survey}. Determining the winner(s) is complicated by the presence of \textit{pairwise-majority preference cycles}, which occur whenever $(a,b), (b,c),$ $(c,a)\in\cT$, for $a,b,c\in\cV$ with $a\neq b\neq c$. \textit{Slater's rule} is the tournament solution that finds the minimum number of majority relation reversals (i.e., replacements of $(a,b)$ with $(b,a)$) needed to remove all pairwise-majority preference cycles in $\cT$; it selects the maximal alternative from each topological (i.e., linear) ordering that can be obtained by making the minimum number of changes to $\cT$.

The existing literature on tournament solutions tends to focus on the case where there is a strict pairwise majority relation between every pair of alternatives in $\cV$; this is guaranteed by assuming that $p$ is odd. However, this assumption is rather artificial and difficult to justify in various practical social choice contexts \citep{brandt2018extending}. When it is removed, ties in the pairwise majority relations are possible and can be encoded with a \textit{weak tournament} $\cW=(\cV,\succeq)$, defined as a complete binary relation over $\cV$. For any pair of candidates $a,b\in\cV$, a strict pairwise majority relation $a\succ b$ is encoded as with conventional tournaments. Whenever there is no strict majority who prefers $a$ over $b$ (and vice versa), i.e., $a\approx b$, then both $(a,b), (b,a)$ are added as elements of $\cW$. Weak tournament \textit{extensions} or generalizations have been introduced for many C1 social choice functions (e.g., see \citep{per99con}). Formally, a weak tournament solution $S$ is an extension of a tournament solution $S'$ if $S(\cW)=S'(\cW)$, whenever $\cW\in\cT$ \citep{brandt2018extending}. To the best of our knowledge, no weak tournament extensions have been introduced specifically for Slater's rule. The ensuing theorem, whose proof is provided in Appendix \ref{Appendix Proof of Slater's Theorem}, introduces one such extension based on BP-FASP. 
\begin{Theorem}\label{Theorem: Slater}
Let $\cW$ define a weak tournament over a set of candidates $\cV$. Define a weak tournament solution $S$ as the alternative(s) in the maximal subset of the $K$-way partition obtained by making the minimum number of majority relation reversals to $\cW$. Then, $S$ is an extension of Slater's rule to weak tournaments. 
\end{Theorem}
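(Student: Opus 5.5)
The plan is to check the definition of an extension directly: whenever the weak tournament $\cW$ is in fact a (strict) tournament $\cT$, the set $S(\cW)$ defined in Theorem \ref{Theorem: Slater} must equal the Slater set $S'(\cT)$. The natural bridge is to view $\cW$ as an unweighted complete digraph on $\cV$ and to read $S$ through BP-FASP. In this graph each strict relation $a\succ b$ is a single arc $(a,b)$, and each tie $a\approx b$ is a bidirectional pair. A set of majority reversals that makes $\cW$ free of pairwise-majority cycles corresponds to a bidirectionality-preserving feedback arc set: reversing $(a,b)$ to $(b,a)$ in a complete relation amounts to deleting $(a,b)$, since the complementary arc is then the only one left between the pair, or is already present.

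The first step is to make this correspondence precise for complete digraphs. I would show that removing a set $\mathcal{A}'$ leaves a unicycle-free graph if and only if the remaining relation is a weak ordering. The ordered partition of that weak ordering is then exactly the acyclic $K$-way partition in which every class induces a complete digraph, as described after the BP-FASP definition. Algorithm \ref{Topological Sorting} outputs this partition, and it is unique because every pair of nodes is adjacent. Given this, ``the maximal subset of the $K$-way partition obtained by the minimum number of reversals'' is well defined: it is the first block returned by Algorithm \ref{Topological Sorting} on $\mathcal{G}\setminus$MBP-FAS$(\mathcal{G})$, taken over all optimal solutions.

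The second step specializes to a strict tournament $\cT$. The digraph has no bidirectional arcs, and every unicycle-free subgraph obtained by deletions also has none, since deletion never creates an arc. So every feasible BP-FAS leaves a DAG, and BP-FASP on $\cT$ coincides with FASP, matching the remark that FASP is a special case. A minimum FAS of a tournament is exactly a minimum set of reversals yielding a transitive tournament: reversing the deleted arcs gives an acyclic complete tournament, hence a linear order. Conversely, any set of reversals yielding a linear order is a feedback arc set of the same size. In this case every block of the $K$-way partition is a singleton, so the maximal subset is the top element of the linear order. Taking the union over all optimal solutions gives exactly Slater's winners, that is, $S(\cT)=S'(\cT)$.

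I expect the main obstacle to be the correspondence in the first step, specifically the claim that unicycle-free complete digraphs are precisely weak orders. The argument would be as follows. Take nodes $i,j,k$ with $i\succeq j$ and $j\succeq k$ but not $i\succeq k$; then $(k,i)$ is present and $(i,k)$ is absent, so $i\to j\to k\to i$ is a path with no reverse, which is a unicycle. This gives transitivity, and completeness is automatic. The converse needs the observation that in a weak order every directed cycle stays inside a single tie class, and such a cycle can be traversed backwards. A second point to handle with care is the identification of ``reversal'' with ``deletion'' for tied pairs, where removing one arc of a bidirectional pair turns the tie into a strict relation. This affects only the weak case, however, so it does not threaten the extension claim.
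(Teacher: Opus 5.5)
Your second step is the paper's proof. A strict tournament has no bidirected arcs, so every feasible BP-FAS leaves a DAG and BP-FASP reduces to FASP; every block of the partition is then a singleton, and the maximal block is the Slater winner. The paper simply asserts this. You also spell out the deletion/reversal correspondence, which is valid because every arc $(a,b)$ of a minimum FAS has a path $b\leadsto a$ in the remaining DAG, so reversing the deleted arcs yields a transitive tournament and the topological order is unique.

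Your first step is not needed for the extension claim, and as written it is inaccurate. Deleting a strictly oriented arc leaves no arc between the pair, so the post-deletion graph is generally not complete and the ``iff weak ordering'' and ``every pair is adjacent'' claims fail. Uniqueness of the order must instead come from the minimality property above. This does not affect your proof of the statement.
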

Because FASP is a special case of BP-FASP, the latter inherits the computational complexity of the former. This includes being NP-hard \citep{karp1972reducibility} and also APX-hard \citep{kann1992approximability} on general graphs, meaning that there is no polynomial-time approximation scheme (PTAS) for FASP (and BP-FASP), unless P = NP. Note, however, that FASP has a PTAS for tournament graphs \citep{kenyon2007rank} and planar graphs \citep{grotschel1985acyclic}. We conjecture that, under similar graph restrictions, BP-FASP (and thus the featured Slater's rule extension) has a PTAS. 

It is important to explain that, although a generic generalization known as the \textit{conservative extension} can be applied to any C1 social choice function \citep{brandt2016handbook}, its implementation for Slater's rule would prove impractical. The conservative extension works by generating all possible (strict) tournament orientations embedded within a weak tournament, applying the given tournament solution to each orientation in the usual way, and finally including the winners from all of the tournament orientations as the winners of the weak tournament. For instance, if $\cW$ contains only one tie in the pairwise majority relations, say for pair $a$ and $b$, then orientation $\cT_{a\succ b}\in\cW$ would contain $(a,b)$ but not $(b,a)$, and orientation $\cT_{b\succ a}\in\cW$ would contain $(b,a)$ but not $(a,b)$. The winners of $\cW$ according to tournament solution $S$ would be given by $S[\cW]= S(\cT_{b\succ a}) \cup S(\cT_{a\succ b})$. Two ties in the pairwise majority relations yield four possible tournament orientations and, in general, with $q$ ties, there are $2^q$ possible tournament orientations, where $q\le n(n-1)/2$ \citep{brandt2018extending}. This creates two types of challenges for the case of Slater's rule. From a social choice perspective, the set of winners may become overly large since Slater's rule may yield multiple alternative optimal solutions when applied to a single tournament orientation. From a computational perspective, an instance of FASP must be solved for each orientation, making this approach prohibitive, even for a small number of pairwise majority ties. Conversely, BP-FASP entails solving one NP-hard problem. It is also reasonable to conjecture that the size of winning set with BP-FASP would be smaller than with the conservation extension. We leave the exploration of this conjecture and the exploration of a PTAS for BP-FASP for future work.\llV

\section{Formulations and Polyhedral 
Comparison}\label{Sec: Mathematical Formulations}
This section proposes two binary linear programming (BLP) formulations for BP-FASP, and it uses polyhedral analysis to compare them. It is important to note that the proposed formulations directly seek a weak ordering that can be associated with the unicycle-free subgraph that results after $\mathcal{A}'$ is removed; hence, the contents of $\mathcal{A}'$ are  obtained in post-processing. Computing the traditional feedback arc set via mathematical optimization follows a similar process \citep{baharev2021exact}, regardless of whether the associated linear ordering is the primary object of interest.

Next, we discuss the motivation and underlying logic of the proposed formulations. Recall that there are two requirements for associating a weak ordering $\boldsymbol{\sigma}$ to a unicycle-free subgraph $\mathcal{G}'$ of $\mathcal{G}$: 1) there are no directed arcs or directed paths in $\mathcal{G}'$ from a lower-ranked node to a higher-ranked node in $\boldsymbol{\sigma}$, and 2) there is a bidirectional arc in $\mathcal{G}'$ between every pair of nodes that are tied in $\boldsymbol{\sigma}$. These two conditions shape the objective function and constraints of the mathematical formulations, respectively. In greater detail, combining the second requirement with the fact that $\mathcal{G}'$ is a subgraph of $\mathcal{G}$, if there are no bidirectional arcs between $i$ and $j$ in $\mathcal{G}$, then weak orderings that tie these two cannot be associated with $\mathcal{G}'$. That is, two nodes \textit{can} be tied in $\boldsymbol{\sigma}$ only if there is a bidirectional arc between them in $\mathcal{G}$. Nonetheless, one may be ultimately ranked ahead of the other, if doing so minimizes the objection function. Hence, there must be a strict ordering between any pair of nodes without a bidirectional arc. Moreover, due to first requirement, any arc $(i, j) \in \mathcal{A}$ for which $j$ is strictly ranked above $i$ in $\boldsymbol{\sigma}$ must be removed in forming $\mathcal{G}'$. Hence, the objective function minimizes the summation of the weights of such removed arcs.

Let $y_{ij}$ be equal to 1 if $i$ is \textit{ranked ahead of or tied with} $j$ in $\boldsymbol{\sigma}$, and 0 otherwise. Two nodes are tied in $\boldsymbol{\sigma}$ if $y_{ij} = y_{ji} = 1$, and $i$ is strictly ranked ahead of $j$ if $y_{ij} = 1, y_{ji} = 0$. The first proposed formulation, denoted as BLP \#1, is given by:\llV
\allowdisplaybreaks
{\small
\begin{subequations}\label{prob: BMFARS}
\begin{align}
\text{(BLP \#1)} \quad \min \quad & z = \sum_{(i, j) \in \mathcal{A}} w_{ij}(1 - y_{ij}) \label{eq: Formulation1-OBJ} \\
\text{s.t.} \quad 
& y_{ij} + y_{ji} \geq 1 && \hspace{1pt} \forall i, j \in \mathcal{N}|\hspace{2pt} (i, j) \in \mathcal{A} \hspace{5pt}\text{and}\hspace{5pt} (j, i) \in \mathcal{A} \label{eq: Formulation1-eq1}\\
& y_{ij} + y_{ji} = 1 && \hspace{1pt} \forall i, j \in \mathcal{N}|\hspace{2pt} (i, j) \notin \mathcal{A} \hspace{5pt}\text{or}\hspace{5pt} (j, i) \notin \mathcal{A} \label{eq: Formulation1-eq2}\\
& y_{ij} - y_{ik} - y_{kj} \geq -1 &&\hspace{1pt} \forall i, j, k \in \mathcal{N}; \hspace{2pt} i \neq j \neq k  \label{eq: Formulation1-eq3}
% & y_{ij} \in \{0, 1\} && \hspace{1pt}  \forall i, j \in \mathcal{N}; \hspace{2pt}  i \neq j\label{eq: Formulation1-eq4}
\end{align}
\end{subequations}
}
Constraints \eqref{eq: Formulation1-eq1}-\eqref{eq: Formulation1-eq2} determine the relative ordering of each item-pair in $\boldsymbol{\sigma}$. That is, $i$ and $j$ \emph{may} be tied in $\boldsymbol{\sigma}$ only if there is a bidirectional arc between them in $\mathcal{G}$; otherwise, one must be ranked ahead of the other. Constraint \eqref{eq: Formulation1-eq3} enforces (preference) transitivity in $\boldsymbol{\sigma}$ by preventing unicycles of size three; we refer the reader to \cite{yoo2021new} for more details. Given a feasible solution to BLP \#1, $\mathcal{A}'$ is obtained as the set of arcs $(i, j) \in \mathcal{A}$ where $j$ is strictly ranked ahead of $i$, that is, $\mathcal{A}' := \{(i, j) \in \mathcal{A}|y_{ij} = 0\}$ (these are the arcs whose weights correspond to the minimum objective function value). The rank of node $i$ in $\boldsymbol{\sigma}$ is obtained as $n - \sum_{j \in \mathcal{N}: i \neq j} y_{ij}$. 
\begin{Proposition}\label{Proposition: admit} 
BLP \#1 admits any weak ordering that can be associated with a unicycle-free subgraph within a given digraph $\mathcal{G} = (\mathcal{N}, \mathcal{A})$.
\end{Proposition}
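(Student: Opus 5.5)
Take an arbitrary unicycle-free subgraph $\mathcal{G}'=(\mathcal{N},\mathcal{A}\setminus\mathcal{A}')$ of $\mathcal{G}$ and a weak ordering $\boldsymbol{\sigma}$ associated with it, i.e., one satisfying the two conditions stated after Algorithm \ref{Topological Sorting}. These conditions are that no arc or directed path of $\mathcal{G}'$ goes from a lower-ranked to a higher-ranked node, and that every pair tied in $\boldsymbol{\sigma}$ is joined by a bidirectional arc of $\mathcal{G}'$. Encode $\boldsymbol{\sigma}$ in the natural way: set $y_{ij}=1$ if $i\succeq j$ in $\boldsymbol{\sigma}$, and $y_{ij}=0$ otherwise. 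We then verify that this binary vector satisfies \eqref{eq: Formulation1-eq1}--\eqref{eq: Formulation1-eq3}. Because the objective \eqref{eq: Formulation1-OBJ} only evaluates a feasible point, nothing beyond feasibility needs to be shown. As a side remark, we can also show that the induced set $\{(i,j)\in\mathcal{A}\mid y_{ij}=0\}$ is contained in $\mathcal{A}'$. Every arc of $\mathcal{G}'$ goes from a node ranked ahead of or tied with its head, by the first condition, so such an arc has $y_{ij}=1$. Hence the BLP cost of $\boldsymbol{\sigma}$ is at most the weight of $\mathcal{A}'$.

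The verification proceeds constraint by constraint. Completeness of the weak order gives $i\succeq j$ or $j\succeq i$ for each pair, so $y_{ij}+y_{ji}\ge 1$ always holds; this covers \eqref{eq: Formulation1-eq1}. For \eqref{eq: Formulation1-eq2}, suppose $(i,j)\notin\mathcal{A}$ or $(j,i)\notin\mathcal{A}$. Then there is no bidirectional arc between $i$ and $j$ in $\mathcal{G}$, and hence none in the subgraph $\mathcal{G}'$. By the second association condition, $i$ and $j$ cannot be tied, so exactly one of $y_{ij},y_{ji}$ equals $1$ and the sum is $1$. For \eqref{eq: Formulation1-eq3}, the inequality can only fail when $y_{ik}=y_{kj}=1$ and $y_{ij}=0$, that is, when $i\succeq k$, $k\succeq j$, but not $i\succeq j$. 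Transitivity of the weak order excludes exactly this case.

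The main obstacle is making the notion of ``associated'' precise enough that the second condition really forces strictness on non-bidirectional pairs. A careless reading might allow a tie between nodes with no arc between them at all; the example orderings for Fig.~\ref{A sample unicycle-free graph with } show that such pairs must be ordered strictly. I would therefore first state the two association conditions explicitly as the definition being used, and note that the ordered-partition output of Algorithm \ref{Topological Sorting} satisfies them, because its classes $\boldsymbol{\Xi}$ consist of mutually bidirectionally-linked nodes. Once that is fixed, the rest is the routine check above.
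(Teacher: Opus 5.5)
Your argument is correct and is essentially the paper's: the paper takes as given that binary $y$ satisfying $y_{ij}+y_{ji}\ge 1$ and \eqref{eq: Formulation1-eq3} encode exactly the weak orders, and then notes that \eqref{eq: Formulation1-eq2} only forbids ties on pairs lacking a bidirectional arc in $\mathcal{G}$; you check the same thing constraint by constraint, using completeness, transitivity, and the second association condition. Drop your aside that the classes $\boldsymbol{\Xi}$ of Algorithm~\ref{Topological Sorting} are always mutually bidirectionally linked: it is not needed for the proposition, and it fails for the unicycle-free graph $u\leftrightarrow v\leftrightarrow w$ with no arc between $u$ and $w$, where starting the search at $v$ places $u$ and $w$ in the same class.
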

The proof of Proposition \ref{Proposition: admit} is provided in Appendix \ref{Appendix Proof of Proposition 1}.

\FloatBarrier
In BLP \#1, the transitivity of $\boldsymbol{\sigma}$ is guaranteed via Constraint \eqref{eq: Formulation1-eq3}, which have cardinality of $O(n^3)$. Next, we introduce and describe an alternative set of constraints, with reduced cardinality of $O(n^2)$, which are given by:\vspace{-10pt}
\allowdisplaybreaks
\begin{small}
\begin{subequations}\label{eq: Other Transitivity}
\begin{align}
\allowdisplaybreaks
&\sum_{u \in \mathcal{N}\backslash \{i, j\}}(y_{uj} - y_{ui}) \geq (n - 2)(y_{ij} - 1), \hspace{6pt} \forall i, j \in \mathcal{N}; i \neq j, \label{eq: 161}\\
&\sum_{u \in \mathcal{N}\backslash \{i, j\}}(y_{iu} - y_{ju}) \geq (n - 2)(y_{ij} - 1), \hspace{6pt} \forall i, j \in \mathcal{N}; i \neq j,\label{eq: 162}\\
&\sum_{u \in \mathcal{N}\backslash \{i, j\}}(y_{iu} + y_{uj}) \leq (n - 2)(y_{ij} + 1), \hspace{6pt} \forall i, j \in \mathcal{N}; i \neq j.\label{eq: 163}
\end{align}
\end{subequations}   
\end{small}
Constraint \eqref{eq: 161} states that if $i$ and $j$ are tied or if $i$ is ranked ahead of $j$, i.e. $y_{ij} = 1$, the number of items that are ranked ahead of $j$ must be greater than or equal to the number of items that are ranked ahead of $i$; it becomes redundant otherwise. Constraint \eqref{eq: 162} states that if $i$ and $j$ are tied or if $i$ is ranked ahead of $j$, the number of items over which $i$ is ranked must be greater than or equal to the number of items over which $j$ is ranked. Constraint \eqref{eq: 163} states that if $j$ is strictly ranked ahead of $i$, i.e. $y_{ij} = 1$, the total number of items that are ranked ahead of $j$, plus the number of items over which $i$ is ranked ahead, must less than or equal to $n-2$ (the total remaining items). Note that Constraints \eqref{eq: Other Transitivity} can be applied to any mathematical formulation concerned with the transitivity of rankings, such as in LOP and its applications, including ranking aggregation. The following proposition, whose proof is provided in Appendix \ref{Appendix Proof of transitivity}, proves that these constraints provide an alternative to Constraint \eqref{eq: Formulation1-eq3} for enforcing transitivity.
\begin{Proposition}\label{prop: transitivity}
The transitivity requirement can be imposed by Constraints \eqref{eq: Other Transitivity}.
\end{Proposition}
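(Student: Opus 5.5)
Throughout, I read the statement as: in the presence of Constraints \eqref{eq: Formulation1-eq1}--\eqref{eq: Formulation1-eq2}, which force $y_{ij}+y_{ji}\ge 1$ for every pair, a binary vector $y$ satisfies Constraints \eqref{eq: Other Transitivity} if and only if it satisfies the transitivity requirement (equivalently, Constraint \eqref{eq: Formulation1-eq3}). So two directions must be checked: validity (every weak ordering satisfies \eqref{eq: Other Transitivity}) and sufficiency (every binary, complete $y$ satisfying \eqref{eq: Other Transitivity} is transitive). The tool for both is the pair of score quantities $o_i:=\sum_{u\neq i} y_{iu}$ (out-score) and $r_i:=\sum_{u\neq i} y_{ui}$ (in-score).

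\emph{Validity.} Let $y$ encode a weak ordering $\boldsymbol{\sigma}$ and fix $i\neq j$.
If $y_{ij}=1$, i.e.\ $i\succeq j$, then transitivity gives two termwise inequalities for every $u\notin\{i,j\}$:
\begin{itemize}
\item $u\succeq i$ implies $u\succeq j$, so $y_{ui}\le y_{uj}$; hence \eqref{eq: 161} holds termwise with right-hand side $0$;
\item $j\succeq u$ implies $i\succeq u$, so $y_{ju}\le y_{iu}$; hence \eqref{eq: 162} holds.
\end{itemize}
Constraint \eqref{eq: 163} is slack in this case, because its right-hand side is $2(n-2)$.
If $y_{ij}=0$, then \eqref{eq: 161}--\eqref{eq: 162} have right-hand side $-(n-2)$ and are trivially satisfied. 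For \eqref{eq: 163}, no $u$ can have $y_{iu}=y_{uj}=1$, since otherwise transitivity would give $i\succeq j$. So each summand is at most $1$ and the sum is at most $n-2$.

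\emph{Sufficiency.} First I convert \eqref{eq: 161}--\eqref{eq: 162} into score monotonicity. If $y_{ij}=1$, then \eqref{eq: 161} reads $r_j-y_{ij}\ge r_i-y_{ji}$, i.e.\ $r_i\le r_j-1+y_{ji}\le r_j$. Likewise \eqref{eq: 162} reads $o_i-y_{ij}\ge o_j-y_{ji}$, so $o_i\ge o_j+1-y_{ji}$. In particular:
\begin{itemize}
\item $y_{ij}=1$ implies $o_i\ge o_j$;
\item if moreover $y_{ji}=0$ (strict preference), then $o_i\ge o_j+1$.
\end{itemize}

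Now suppose, for contradiction, that $y_{ik}=y_{kj}=1$ but $y_{ij}=0$. Completeness forces $y_{ji}=1$, so $j$ is strictly ahead of $i$, and the strict form above gives $o_j\ge o_i+1$. On the other hand, applying the weak form along $i\to k\to j$ gives $o_i\ge o_k\ge o_j$. These two inequalities contradict each other, so $y_{ij}=1$, which is transitivity.

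I expect the main obstacle to be exactly this step. The aggregated constraints \eqref{eq: Other Transitivity} do not yield termwise implications for binary $y$, so one cannot argue triple by triple. The remedy is to pass through the scores and to exploit the $-y_{ji}$ correction term hidden in \eqref{eq: 161}--\eqref{eq: 162}, which produces a strict gap under strict preference.

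If the bookkeeping of that correction term turns out to be delicate, I have a backup that uses \eqref{eq: 163}. In the bad case, \eqref{eq: 163} gives $o_i+r_j\le n-2$. Completeness gives $o_j+r_j\ge n-1$, and the chain argument gives $o_i\ge o_j$. Together these yield $o_i+r_j\ge n-1$, again a contradiction.
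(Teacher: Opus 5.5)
Your proof is correct, and it takes a genuinely different route from the paper's. The paper fixes a violating triple with $y_{ij}=y_{jk}=y_{ki}=1$, $y_{ji}=0$. It takes one instance of each family: \eqref{eq: 161} at $(k,i)$, \eqref{eq: 162} at $(j,k)$, and \eqref{eq: 163} at $(j,i)$. It separates out the terms involving the triple and adds the first two. Using completeness $y_{uk}+y_{ku}\ge 1$ on the remaining terms, it gets $\sum_{u\notin\{i,j,k\}}(y_{ju}+y_{ui})\ge n-1$, which contradicts the bound $n-4$ from the third.

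You instead rewrite \eqref{eq: 162} exactly as $o_i-y_{ij}\ge o_j-y_{ji}$ whenever $y_{ij}=1$. This makes it a monotonicity statement for out-scores: weak preference gives $o_i\ge o_j$, and strict preference gives $o_i\ge o_j+1$. A violating triple then yields $o_i\ge o_k\ge o_j\ge o_i+1$.

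Your route buys three things. First, it shows that \eqref{eq: 162} alone, together with the completeness supplied by \eqref{eq: Formulation1-eq1}--\eqref{eq: Formulation1-eq2}, already forces transitivity of binary solutions. The mirror argument with in-scores shows the same for \eqref{eq: 161} alone. So the other families are redundant for the binary feasible set and can only matter for the LP relaxation. Second, it explains why aggregated constraints work at all: they force $y$ to agree with a score ordering. Third, you verify validity, i.e., that every weak ordering satisfies \eqref{eq: Other Transitivity}. The paper's proof of this proposition does not address that direction; it is only implicit in the later inclusion $\mathcal{P}^1\subseteq\mathcal{P}^2$ of Theorem~\ref{Theorem: polydehral}. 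Yet it is needed for BLP \#2 to have the same binary feasible set as BLP \#1.

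The paper's argument is a single local computation that uses all three families symmetrically. Your backup via \eqref{eq: 163} is closer in spirit to it: it pits a lower bound from completeness and score monotonicity against the upper bound from \eqref{eq: 163}.
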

Next, we introduce a second formulation, denoted as BLP \#2, which is obtained by replacing Constraint \eqref{eq: Formulation1-eq3} by Constraints \eqref{eq: Other Transitivity}, and is given by $\{\underset{y}{\min} \hspace{5pt} \eqref{eq: Formulation1-OBJ} \hspace{5pt} \text{s.t.} \hspace{5pt}\eqref{eq: Formulation1-eq1}-\eqref{eq: Formulation1-eq2}, \eqref{eq: Other Transitivity}\}$.

The ensuing theorem compares the strength of the proposed BLPs by analyzing their linear programming (LP) polyhedra. The proof is given in Appendix \ref{Appendix Proof of Polyhedral Theorem}. 
\begin{Theorem}\label{Theorem: polydehral}
Let $\mathcal{P}^1$ and $\mathcal{P}^2$ denote the LP-relaxed regions of BLPs \#1 and \#2, respectively. For any instance of BP-FASP, $\mathcal{P}^1 \subseteq \mathcal{P}^2$, and this inclusion can be strict.
\end{Theorem}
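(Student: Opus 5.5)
To prove $\mathcal{P}^1 \subseteq \mathcal{P}^2$, I will show that every point of the LP relaxation of BLP \#1 satisfies Constraints \eqref{eq: Other Transitivity}. Both relaxations share \eqref{eq: Formulation1-eq1}--\eqref{eq: Formulation1-eq2} and the bounds $0\le y_{ij}\le 1$, so it suffices to derive each of \eqref{eq: 161}--\eqref{eq: 163} as a nonnegative combination of the triangle inequalities \eqref{eq: Formulation1-eq3}, the pair constraints, and the bounds. The natural route is aggregation. Fix an ordered pair $(i,j)$ and sum the appropriate triangle inequality over all $u\in\mathcal{N}\setminus\{i,j\}$, producing $n-2$ copies of a constant on the right-hand side.

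For \eqref{eq: 161}, use the instance of \eqref{eq: Formulation1-eq3} with indices $(u,j,i)$:
\begin{equation*}
y_{uj} - y_{ui} - y_{ij} \ge -1 ,
\end{equation*}
that is, $y_{uj}-y_{ui}\ge y_{ij}-1$. Summing over $u$ gives exactly
\begin{equation*}
\sum_u (y_{uj}-y_{ui}) \ge (n-2)(y_{ij}-1).
\end{equation*}
For \eqref{eq: 162}, use the triple $(i,u,j)$, namely $y_{iu}-y_{ij}-y_{ju}\ge -1$, which rearranges to $y_{iu}-y_{ju}\ge y_{ij}-1$; summing again yields the constraint.

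For \eqref{eq: 163} I need $y_{iu}+y_{uj}\le y_{ij}+1$ for each $u$. I plan to get it from the triangle inequality on $(j,u,i)$, namely $y_{ji}\ge y_{ju}+y_{ui}-1$, combined with $y_{ij}+y_{ji}\ge 1$ (which holds under both \eqref{eq: Formulation1-eq1} and \eqref{eq: Formulation1-eq2}). Write $y_{ju}=1-y_{uj}+s_{uj}$ with slack $s_{uj}\ge 0$, and similarly for the other pairs.

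This is the step I expect to be the main obstacle. The pair constraints give only lower bounds $y_{ab}+y_{ba}\ge 1$ on bidirected pairs, so the needed upper-type inequality does not follow directly in the fractional setting. Two fallbacks are available. The first is to observe that \eqref{eq: 163} only needs to hold in aggregate, and to pair the $u$-terms cleverly. The second is to use the upper bounds $y\le 1$: for $y_{ij}$ near $0$, the left side of \eqref{eq: 163} is at most $2(n-2)$ while the right side is at least $n-2$, so a convex combination argument mixing the summed triangle inequality with the trivial bound may close the gap. If neither works, I will check whether the paper's convention makes \eqref{eq: 163} redundant when $y_{ij}=1$ and derive it via $y_{ij}+y_{ji}=1$ on non-bidirected pairs, treating the bidirected case separately.

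For strictness, I will exhibit a small instance, for example the complete digraph on $n=3$ or $n=4$ nodes, together with a fractional point in $\mathcal{P}^2\setminus\mathcal{P}^1$. The aggregated constraints average over $u$, so a point that violates a single triangle inequality can still satisfy the sums. The plan is to pick $y$ with one triple violating \eqref{eq: Formulation1-eq3}, such as $y_{ik}=y_{kj}=1$ and $y_{ij}=0$, compensated by other values, with $n=4$ so that averaging has room to work. I will then verify \eqref{eq: 161}--\eqref{eq: 163} and the pair constraints by direct substitution.
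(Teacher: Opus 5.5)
Your derivations of \eqref{eq: 161} and \eqref{eq: 162}, obtained by summing \eqref{eq: Formulation1-eq3} over $u$, are correct. They are the same aggregation the paper uses: its rearrangement of \eqref{eq: 161} for the pair $(k,j)$ is your sum with the $u=i$ term split off.

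The gap is at \eqref{eq: 163}, and the obstacle comes from choosing the wrong instance of the triangle inequality. Take \eqref{eq: Formulation1-eq3} on the triple $(i,j,u)$, i.e.\ with $k=u$:
\[
y_{ij}-y_{iu}-y_{uj}\ge -1 \quad\Longleftrightarrow\quad y_{iu}+y_{uj}\le y_{ij}+1 .
\]
Summing over $u\in\mathcal{N}\setminus\{i,j\}$ gives \eqref{eq: 163} verbatim, with no use of the pair constraints or the bounds. In general, \eqref{eq: 161}, \eqref{eq: 162} and \eqref{eq: 163} are the sums of the triangle slacks $y_{ab}-y_{ac}-y_{cb}+1\ge 0$ over the first, second and third index, respectively.

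The route you propose instead cannot work. The triangle on $(j,u,i)$, with your substitution $y_{ju}=1-y_{uj}+s_{uj}$, $y_{ui}=1-y_{iu}+s_{iu}$, becomes $y_{iu}+y_{uj}\ge 1-y_{ji}+s_{uj}+s_{iu}$. That is a lower bound on the left-hand side of \eqref{eq: 163}, and $y_{ij}+y_{ji}\ge 1$ cannot reverse it. Your listed fallbacks are not arguments. As written, the containment $\mathcal{P}^1\subseteq\mathcal{P}^2$ is not established, although the fix is one line.

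The strictness plan also fails at its concrete step. A point with $y_{ij}=0$ and $y_{ik}=y_{kj}=1$ is never in $\mathcal{P}^2$, for any $n$ and any values of the remaining variables. With sums over $u\notin\{i,j,k\}$, the three relevant constraints reduce as follows:
\begin{itemize}
\item \eqref{eq: 161} for $(k,j)$ gives $\sum_u(y_{uj}-y_{uk})\ge 1$;
\item \eqref{eq: 162} for $(i,k)$ gives $\sum_u(y_{iu}-y_{ku})\ge 1$;
\item \eqref{eq: 163} for $(i,j)$ gives $\sum_u(y_{iu}+y_{uj})\le n-4$.
\end{itemize}
Adding the first two and using $y_{uk}+y_{ku}\ge 1$ gives $\sum_u(y_{iu}+y_{uj})\ge n-1$, a contradiction. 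Up to relabeling, this is exactly the computation in the proof of Proposition~\ref{prop: transitivity}.

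Also, $n=3$ cannot work: each sum in \eqref{eq: Other Transitivity} then has a single term, so $\mathcal{P}^1=\mathcal{P}^2$. The violation must therefore be fractional and small enough to be absorbed by slack from a fourth node.

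The paper's witness does this. Take $n=4$, with $y_{a4}=1$ and $y_{4a}=0$ for $a\in\{1,2,3\}$, and
\[
y_{12}=0.1,\quad y_{21}=0.9,\quad y_{13}=0.6,\quad y_{31}=0.4,\quad y_{32}=0.6,\quad y_{23}=0.4 .
\]
\begin{itemize}
\item Every pair sums to $1$, so \eqref{eq: Formulation1-eq1}--\eqref{eq: Formulation1-eq2} hold for any arc set.
\item $y_{12}-y_{13}-y_{32}=-1.1$, which violates \eqref{eq: Formulation1-eq3}.
\item Each of the three cyclically oriented triangles on $\{1,2,3\}$ has slack $-0.1$. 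In every aggregated constraint containing one of them, it is paired with the $u=4$ term, whose slack is at least $0.1$. For example, \eqref{eq: 163} for $(1,2)$ reads $1.2+1\le 2.2$, which is tight.
\end{itemize}
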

Theorem \ref{Theorem: polydehral} indicates that the LP-relaxation of BLP \#1 gives an objective value that is always at least as good as that of BLP \#2, and sometimes better. This means BLP \#1 offers a lower bound that is never worse than BLP \#2. However, solving the LP-relaxation of BLP \#1 requires significantly more time than BLP \#2, due to its higher number of constraints. This is unsurprising, as the performance of LP solution methods depends on both the number of variables and constraints. In preliminary tests, BLP \#2 obtained  competitive lower bounds on randomly generated instances in far less time. Future work will extend this analysis in directions similar to those of \citep{conitzer2006improved} and \citep{akbari2021lower}, who study the computational time and solution quality of different lower-bounding techniques for strict and non-strict rank aggregation problems, respectively.\llV\lV 

\section{IEOR Faculty Hiring Network}\label{Sec: Case Study}
This section studies the IEOR FHN \citep{del2020exponential}. Section \ref{University Rankings and Faculty Hiring Networks} provides an overview of university rankings, FHNs, and the methods of analyzing FHNs. Section \ref{IEOR Dataset Description} describes the IEOR dataset, and Section \ref{Results} compares BP-FASP methodology versus existing approaches to determine the ranking and tiering of IEOR departments. 
\allowdisplaybreaks\llV
\subsection{University Rankings and Faculty Hiring Networks}\label{University Rankings and Faculty Hiring Networks}
University rankings have wide-ranging implications. They impact prospective students who use them to identify suitable universities to attend. Additionally, they impact the status of higher education institutions and their external funding, as rankings are said to reflect educational excellence. Naturally, a higher rank attracts high-quality scholars and more funding, including donations, gifts, and endorsements. Many outlets produce university ranking lists including U.S.NEWS, the Quacquarelli Symonds (QS), and the Times Higher Education (THE). Among them, U.S.NEWS is unique in its publication of discipline-specific university rankings. 

The methodologies used by these organizations raise several important concerns. First, there is ongoing debate over the selection of criteria, their associated weights, and their interdependencies \citep{ccakir2015comparative}. Take, for instance, U.S.NEWS's methodology for ranking graduate engineering schools, which is based on four main criteria: research activity (50\%), quality assessment (25\%), faculty resources (20\%), and student selectivity, i.e., acceptance rate (5\%). Each of these comprises multiple sub-criteria, for example, research activity is made up of total research expenditures (25\%)---a total of externally funded, public or private, engineering research expenditures---average research expenditures per faculty member (15\%), and faculty research (10\%)---assessed through bibliometric publication and citation data \citep{USNEWSMethodology}. The process of selecting criteria and weights used by these institutions, along with their ranking methodology, is effectively a black-box in that it can be difficult to access the source data and replicate results. Second, educational inputs such as research funds and selectivity may outweigh educational output. Third, these methodologies may rely on surveys and self-reported data, which can be susceptible to intentional and\textbackslash or accidental errors and bias. To highlight this drawback, consider a recent scandal that caused Columbia University to be dropped from rank 2 to 18 in the U.S.NEWS Best National Universities list due to the submission of inaccurate data by university officials \citep{Columbia}. Since then, many prominent programs have opted out of certain popular rankings including the Law and Medical Schools of Harvard University, Yale University, Columbia University, and Stanford University \citep{bestcolleges}.

The inherent drawbacks of rankings methodologies have prompted researchers to look for remedies. A common suggestion for mitigating the risks of relying on individual rankings is to aggregate the outputs from multiple organizations into a single \emph{consensus ranking} \citep{zhang2021comprehensive}. However, this approach has other drawbacks including inconsistencies in the programs/universities and time horizon evaluated and inaccessibility of datasets. The limitations of using these privatized sources can be sidestepped altogether through outcome-driven approaches based on publicly available data. Chief among these is the analysis of FHN. In effect, when an institution hires a Ph.D. graduate from another institution as a tenure-track faculty, the former effectively bestows prestige to the latter. Thus, an institution's dual abilities to place/attract Ph.D. graduates at/from other programs can determine its position in the hierarchy. By recording and analyzing these hiring data, the academic departments and faculty themselves provide a collective implicit assessment of programs within a specific field of study.

FHNs of many programs have been studied including IEOR \citep{del2020exponential}, Business, Computer Science, History \citep{clauset2015systematic}, Law \citep{katz2011reproduction}, Mathematics \citep{fitzgerald2023temporal}, and Physics \citep{grunspan2024importance}. Study of FHNs can also help identify inequities in the faculty hiring process \citep{way2016gender} and reveal other sociological aspects \citep{katz2011reproduction}. \citet{orland2022there} examined the gender hiring gap in Economics FHN across Austria, Germany, and Switzerland, finding no significant gap. They noted that hired faculty, on average, step down 22.6 ranks from their doctoral institutions to their employing institutions. Similarly, faculties of business, computer science, and history in the U.S. and Canada step down, on average, between 27 and 47 ranks \citep{clauset2015systematic}. Additionally, many studies show that most faculty hires come from a few institutions; for example, only 10\% of IEOR departments trained nearly half of all faculty members \cite{del2020exponential}. As such, some argue that, due to the skewed placement power of institutions, the demographics and research agendas of a field are shaped by highly-ranked Ph.D. programs \citep{clauset2015systematic, way2016gender}. 

A widely used technique for analyzing FHNs is via the \emph{hub score} and \emph{authority score}, which are interconnected scores that belong to the more general \emph{centrality} scores in social networks \citep{katz2011reproduction}. \emph{Hubs} are nodes with many faculty placements, and \emph{authorities} are nodes that hire from such nodes. In other words, hub and authority scores measure the placement capacity and hiring capacity of departments, respectively. The FHNs of Mathematics \citep{fitzgerald2023temporal}, Physics \citep{grunspan2024importance}, and Law \citep{katz2011reproduction} have been analyzed using these concepts. 
% at other high-prestige nodes

Another method of analyzing FHN for deriving a prestige hierarchy is the \textit{minimum violation ranking} (MVR) problem and its variants. MVR measures prestige by finding a ranking of departments with the least number of “violating” arcs, i.e., the number of upward-pointing arcs $(j, i)$, where the rank of department $i$ is better than the rank of department $j$ in the hierarchy. That is, a “violation” is defined as placements of hires into more prestigious departments than their Ph.D.-granting institutions. In general, MVR seeks to rank a set of objects given a set of pairwise comparisons, i.e., akin to ordering teams based on the outcomes of pairwise matchups. MVR is equivalent to FASP \citep{ali1986minimum}, as such, MVR yields an inherently strict ranking.\lllV
\subsection{Description of the Dataset}\label{IEOR Dataset Description}\llV
The IEOR FHN dataset collected in May 2016 by \citet{del2020exponential}. The dataset considers 83 IEOR departments, which was formed by merging departments with a Ph.D. program in the 2016 U.S.NEWS ``Industrial/Manufacturing/Systems Engineering'' graduate rankings with those in the 2011 National Research Council rankings for ``Operations Research, Systems Engineering, and Industrial Engineering''. Only departments with web pages at the time of data collection were included, which excluded the Univ. of Nebraska-Lincoln. For those IEOR departments that are merged with other disciplines in a single department, only those faculty in IEOR field were included. Whenever the departmental information of a Ph.D. was not provided, it was assumed that the degree was granted from IEOR department of the listed institution. Faculty who did not receive their Ph.D. degree in the U.S. were excluded from the dataset, resulting in 1,179 tenured and tenure-track faculty to be included in the final computations. We refer the readers to \citet{del2020exponential} for more details on the dataset.
% (e.g., Mechanical and Industrial Engineering)

The arcs of the network\textbackslash graph are formed from faculties currently in institution $j$, as of May 2016, who received their Ph.D. from institution $i$; the weight of such arc $(i, j)$ equals the number of placements. This implies that the possible intermediate movement of faculties between institutions is ignored, due to a lack of information. Moreover, a \emph{self-hire} is defined as a case where an IEOR department hires a faculty who graduated from the same university, but not necessarily from itself. Only 47.5\% of self-hires were immediate, meaning that certain faculty members commenced their careers from a different institution but later in their career, they rejoined their alma mater. The IEOR graph is fairly sparse, having only 5\% of the total possible arcs (i.e., potential hiring interactions). \citet{del2020exponential} use steepness statistics to analyze the IEOR social network, finding a statistically significant near-linear hierarchy. However, the hierarchy appears relatively flat across most departments—primarily due to the network’s sparsity—except among the top-ranked departments, where it is more pronounced. An induced subgraph of the resulting IEOR FHN is depicted in Fig. \ref{fig:enter-label}.\mmmV %in the online companion.

\begin{figure}[ht]
    \centering
    \vspace{-16pt}
    \includegraphics[scale=0.28]{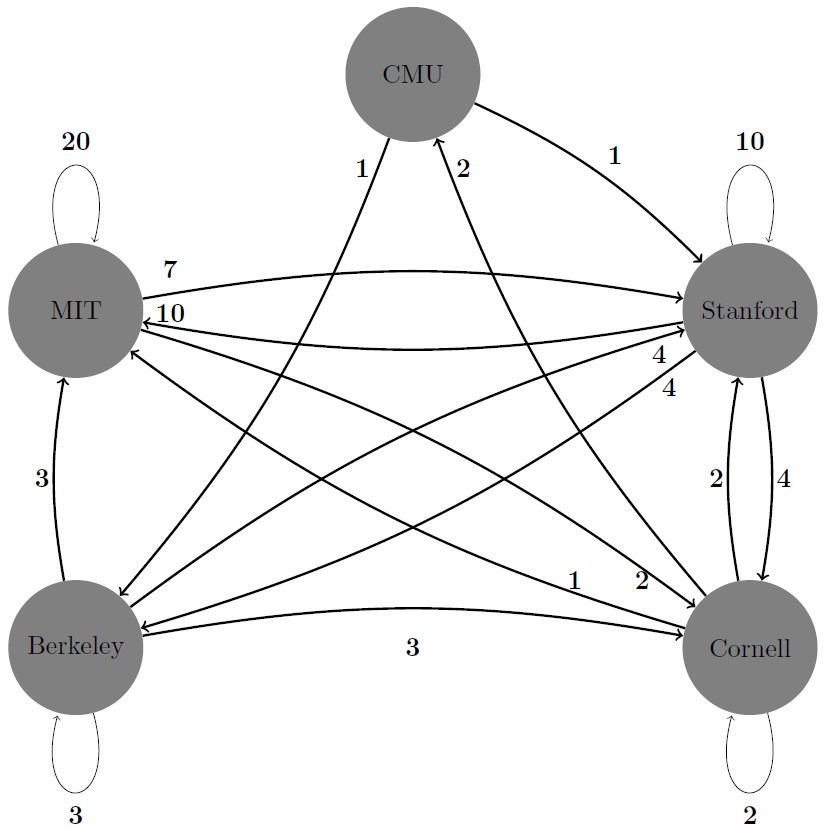}
    \caption{The induced graph reveals the high degree of mutual hiring relationships among Massachusetts Institute of Technology (MIT), Stanford Univ. (Stanford), Univ. of California Berkeley (Berkeley), Carnegie Melon Univ. (CMU), and Cornell Univ. (Cornell)}\vspace{-0.1in}
    \label{fig:enter-label}
\end{figure}\lllV

\subsection{Results}\label{Results}\lV
To analyze the IEOR FHN, we first review the methods used by \citet{del2020exponential}, who deploy MVR and two variants thereof that take into account the \emph{magnitude of violation} and \emph{unexpected placements}\bla. The first variant, minimum Violation and Strength or MVS$_1$, takes into account the difference between the two departments' rank values (i.e., the magnitude of the ranking violations). For example, the penalty of placing a graduate in the first-ranked school from the 11th-ranked school is five times the penalty of the same placement coming from the second-ranked school. The second variant, MVS$_2$, builds on MVS$_1$ and additionally accounts for the situation where a top-ranked department places too many faculties in a lower-ranked department. This outcome is considered unattractive, because it is reasonable to expect that a department with a high number of placements from top-ranked departments should be ranked ahead of departments that fail both to place their Ph.D. graduates at and attract graduates from the top-ranked departments. This corresponds to the case where two departments have closely low hub scores, but one has a higher authority score and should rank higher \citep{del2020exponential}.
% This corresponds to the case where two departments have closely low hub scores but one has a higher authority score and, thus, should be ranked higher \citep{del2020exponential}. 

Due to the sparseness of the network, \citet{del2020exponential} consider the digraph constructed from the input data as a noisy realization (i.e., a sample) of a stochastic network model. To obtain additional observations, Markov Chain Monte Carlo (MCMC) is utilized to sample edges with replacement with a probability of selecting an edge being proportional to its weight. Their computational study generates 1,000 such samples and applies MVR, MVS$_1$, MVS$_2$ to each sampled network. The final rankings yielded by each method are obtained by sorting the departments based on their the average ranks over all the bootstrapped networks, inserting ties when the average ranks of a subset of departments are not statistically significant. Note that MVR and its variants cannot account for ties and may require potentially  computationally intensive steps to incorporate them.

The resulting rankings are displayed in Table \ref{Rankings Table Aggregated} alongside those obtained by FASP, BP-FASP, U.S.NEWS, HUB, and AUTHORITY. Note that the FASP and MVR rankings differ because, as previously explained, the latter are obtained in \cite{del2020exponential} by sorting the departments based on the average ranks over 1,000 bootstrapped networks, whereas we obtain the FASP ranking by directly applying it to the original network induced by the given dataset. Furthermore, in post-processing, we introduce additional ties to BP-FASP to incorporate a notion of fairness. Specifically, a subset of adjacent universities in BP-FASP is considered tied if they have no mutual hiring and any permutation of these universities yields the same objective function value. These ties, which differ from those arising from equivalence in prestige, are marked with an asterisk in Table \ref{Rankings Table Aggregated}. Additionally, since FASP and BP-FASP are defined over simple digraphs, self-hires are effectively ignored in FASP, BP-FASP, MVR, MVS$_1$, and MVS$_2$ rankings. Self-hires are discussed in more detail later in this section.

The fine-grained ranks from the department rankings fail to capture hierarchical information, for instance, the existence of a group characterized by both frequent, mutual hires within the group and infrequent interactions with other institutions. To unveil such insights, \citet{del2020exponential} cluster the departments into hierarchies using \emph{exponential random graph modeling (ERGM)}. This approach collapses the network, allowing for the modeling of placements and hires within groups. ERGM is a statistical distribution on network data, characterizing the distribution in terms of a set of summary statistics of the network, usually calculated from the network's topological features, such as the number of arcs and subgraph counts \citep{lehmann2024bayesian}. \citet{del2020exponential} propose a new latent variable ERGM model in which the conditional probability of an arc, given covariates, depends on the distance of its incident nodes in a latent space. The authors use the MVS$_2$ ranking as vertex covariates, which explain the formation of edges, in the latent variable ERGM. Their clustering divides the IEOR departments into three tiers: 14 departments are in Tier 1 (top-tier), 36 in Tier 2 (mid-tier), and 34 in Tier 3 (low-tier).\mmV

\begin{table}[h!]
\vspace{-11pt}
\caption{Ranking and tiering of IEOR departments using various metrics}
\centering
%\Huge
\Gigantic{54pt}
\resizebox{\columnwidth}{!}{
\begin{tabular}{|p{20cm}|c|c|c|c|c|c|c|c|c|c|}
\hline
\multirow{2}{*}{\textbf{University}} & \multicolumn{8}{c|}{\textbf{Rank}} & \multicolumn{2}{c|}{\textbf{Tier}}\\ \cline{2-11} & \textbf{BP-FASP} & \textbf{FASP} & \textbf{MVR} & \textbf{MVS} & \textbf{MVS$_2$} & \textbf{HUB} & \textbf{AUTHORITY} & \textbf{U.S.NEWS} & \textbf{D-C} & \textbf{BP-FASP}\\ \hline
Carnegie Mellon University & 1 & 1 & 4 & 4 & 4 & 8 & 59 & N/A & 1 & 1\\ \hline
University of California-Berkeley & 2 & 2 & 1 & 1 & 2 & 3 & 18 & 2 & 1 & 1\\ \hline
Massachusetts Institute of Technology & 3 & 4 & 3 & 3 & 3 & 1 & 1 & 6 & 1 & 1\\ \hline
Stanford University & 3 & 3 & 2 & 2 & 1 & 2 & 2 & 4 & 1 & 1\\ \hline
Cornell University & 5 & 7 & 7 & 7 & 7 & 7 & 12 & 7 & 1 & 1\\ \hline
Princeton University & 5 & 5 & 5 & 5 & 5 & 15 & 16 & N/A & 1 & 1\\ \hline
Columbia University & 7 & 8 & 8 & 8 & 8 & 17 & 6 & 11 & 1 & 1\\ \hline
\textbf{University of Michigan Ann Arbor} & 8 & 9 & 6 & 6 & 6 & 4 & 8 & 2 & \bf{2} & \bf{1}\\ \hline
Northwestern University & 9 & 10 & 11 & 11 & 11 & 10 & 28 & 4 & 1 & 1\\ \hline
\textbf{Georgia Institute of Technology} & 10 & 12 & 4 & 10 & 10 & 5 & 3 & 1 & \bf{2} & \bf{1}\\ \hline
\textbf{Pennsylvania State University} & 10 & 15 & 15 & 16 & 16 & 13 & 15 & 12 & \bf{2} & \bf{1}\\ \hline
\textbf{Purdue University West Lafayette} & 10 & 11 & 9 & 9 & 9 & 6 & 13 & 9 & \bf{2} & \bf{1}\\ \hline
\textbf{University of Illinois Urbana-Champaign} & 10 & 13 & 14 & 14 & 12 & 14 & 5 & 15 & \bf{2} & \bf{1}\\ \hline
Lehigh University & 14 & 18 & 26 & 23 & 27 & 43 & 40 & 18 & 2 & 2\\ \hline
\textbf{University of Texas Austin} & 14 & 6 & 16 & 18 & 19 & 22 & 21 & 19 & \bf{1} & \bf{2}\\ \hline
University of Minnesota Twin Cities & 16 & 16 & 20 & 20 & 20 & 29 & 31 & 32 & 2 & 2\\ \hline
Ohio State University & \hspace{4pt} 17$^{*}$ & 23 & 21 & 21 & 18 & 19 & 19 & 17 & 2 & 2\\ \hline
University of Florida & \hspace{4pt} 17$^{*}$ & 17 & 13 & 13 & 15 & 16 & 38 & 19 & 2 & 2\\ \hline
\textbf{University of Iowa} & 19 & 24 & 22 & 22 & 22 & 28 & 71 & 39 & \bf{3} & \bf{2}\\ \hline
University of Pittsburgh & 20 & 25 & 24 & 24 & 23 & 24 & 20 & 23 & 2 & 2\\ \hline
University of Wisconsin Madison & 21 & 14 & 12 & 12 & 13 & 9 & 22 & 7 & 2 & 2\\ \hline
\textbf{University of Pennsylvania} & 22 & 19 & 18 & 15 & 14 & 20 & 7 & 28 & \bf{1} & \bf{2}\\ \hline
North Carolina State University & 23 & 26 & 32 & 33 & 26 & 30 & 29 & 12 & 2 & 2\\ \hline
Virginia Tech & 23 & 27 & 25 & 25 & 24 & 12 & 25 & 9 & 2 & 2\\ \hline
Rutgers University & \hspace{4pt} 25$^{*}$ & 29 & 30 & 28 & 30 & 44 & 49 & 21 & 2 & 2\\ \hline
University of Missouri Columbia & \hspace{4pt} 25$^{*}$& 20 & 23 & 26 & 29 & 45 & 63 & 58 & 2 & 2\\ \hline
\textbf{University of Southern California} & \hspace{4pt} 25$^{*}$ & 28 & 19 & 19 & 17 & 25 & 10 & 12 & \bf{1} & \bf{2}\\ \hline
University of Maryland College Park & 28 & 21 & 17 & 17 & 21 & 11 & 32 & N/A & 2 & 2\\ \hline
Boston University & 29 & 22 & 28 & 30 & 35 & 53 & 24 & 39 & 2 & 2\\ \hline
University of Arizona & \hspace{4pt} 30$^{*}$ & 33 & 27 & 27 & 25 & 27 & 51 & 28 & 2 & 2\\ \hline
\textbf{University of North Carolina Chapel Hill} & \hspace{4pt} 30$^{*}$ & 31 & 42 & 35 & 31 & 38 & 36 & N/A & \bf{1} & \bf{2}\\ \hline
Rensselaer Polytechnic Institute & 32 & 32 & 37 & 41 & 39 & 39 & 45 & 21 & 2 & 2\\ \hline
University of Arkansas Fayetteville & 32 & 34 & 41 & 40 & 36 & 49 & 43 & 39 & 2 & 2\\ \hline
\textbf{University of Illinois Chicago} & 34 & 30 & 33 & 36 & 42 & 48 & 61 & 46 & \bf{3} & \bf{2}\\ \hline
Texas A\&M University & 35 & 36 & 34 & 31 & 32 & 23 & 30 & 15 & 2 & 2\\ \hline
\textbf{University of South Florida} & 35 & 37 & 36 & 38 & 37 & 33 & 67 & 46 & \bf{3} & \bf{2}\\ \hline
University of Virginia & 37 & 38 & 35 & 37 & 33 & 34 & 23 & 28 & 2 & 2\\ \hline
\end{tabular}}
\label{Rankings Table Aggregated}
\end{table}
\FloatBarrier
\begin{table}[!htbp]
\vspace{-11pt}
\caption*{Table \ref{Rankings Table Aggregated} (continued)}
\centering
\Gigantic{54pt}
\resizebox{\columnwidth}{!}{
\begin{tabular}{|p{20cm}|c|c|c|c|c|c|c|c|c|c|}
\hline
\multirow{2}{*}{\textbf{University}} & \multicolumn{8}{c|}{\textbf{Rank}} & \multicolumn{2}{c|}{\textbf{Tier}}\\ \cline{2-11}
& \textbf{BP-FASP} & \textbf{FASP} & \textbf{MVR} & \textbf{MVS} & \textbf{MVS$_2$} & \textbf{HUB} & \textbf{AUTHORITY} & \textbf{U.S.NEWS} & \textbf{D-C} & \textbf{BP-FASP}\\ \hline
State University of New York Buffalo & 38 & 39 & 31 & 29 & 28 & 26 & 33 & 28 & 2 & 2\\ \hline
Iowa State University & 39 & 40 & 47 & 45 & 43 & 41 & 37 & 26 & 2 & 2\\ \hline
Case Western University & \hspace{4pt} 40$^{*}$ & 45 & 53 & 32 & 64 & 36 & 55 & 38 & 2 & 2\\ \hline
\textbf{University of Texas Dallas} & \hspace{4pt} 40$^{*}$ & 42 & 29 & 46 & 60 & 55 & 82 & 58 & \bf{3} & \bf{2}\\ \hline
\textbf{Washington University St. Louis} & \hspace{4pt} 40$^{*}$ & 41 & 43 & 66 & \multicolumn{1}{c|}{67} & 71 & 34 & 39 & \bf{1} & \bf{2}\\ \hline
Arizona State University & 43 & 43 & 49 & 48 & 47 & 21 & 11 & 23 & 2 & 2\\ \hline
University of Washington & 43 & 35 & 40 & 42 & 40 & 42 & 35 & 26 & 2 & 2\\ \hline
\textbf{Clemson University} & \hspace{4pt} 45$^{*}$ & 44 & 58 & 58 & 49 & 61 & 52 & 32 & \bf{3} & \bf{2}\\ \hline
\textbf{Kansas State University} & \hspace{4pt} 45$^{*}$ & 46 & 39 & 50 & 41 & 57 & 48 & 46 & \bf{3} & \bf{2}\\ \hline
\textbf{Texas Tech University} & 47 & 47 & 48 & 47 & 46 & 47 & 68 & 53 & \bf{3} & \bf{2}\\ \hline
\textbf{Missouri University of Science and Technology} & \hspace{4pt} 48$^{*}$ & 50 & 44 & 43 & 51 & 50 & 78 & 58 & \bf{3} & \bf{2}\\ \hline
\textbf{University of Tennessee Knoxville} & \hspace{4pt} 48$^{*}$ & 48 & 69 & 56 & 71 & 68 & 72 & 58 & \bf{3} & \bf{2}\\ \hline
\textbf{University of Massachusetts Amherst} & \hspace{4pt} 50$^{*}$ & 53 & 46 & 39 & 34 & 18 & 9 & 36 & \bf{2} & \bf{3}\\ \hline
Wayne State University & \hspace{4pt} 50$^{*}$ & 51 & 61 & 54 & 54 & 62 & 41 & 53 & 3 & 3\\ \hline
\textbf{George Washington University} & \hspace{4pt} 52$^{*}$ & 54 & 51 & 44 & 45 & 37 & 27 & 53 & \bf{2} & \bf{3}\\ \hline
\textbf{University of Connecticut} & \hspace{4pt} 52$^{*}$ & 55 & 54 & 59 & 44 & 31 & 39 & N/A & \bf{2} & \bf{3}\\ \hline
\textbf{George Mason University} & \hspace{4pt} 54$^{*}$ & 57 & 60 & 51 & 56 & 35 & 42 & 32 & \bf{2} & \bf{3}\\ \hline
Oklahoma State University & \hspace{4pt} 54$^{*}$ & 52 & 38 & 34 & 38 & 32 & 44 & 39 & 3 & 3\\ \hline
\textbf{Northeastern University} & \hspace{4pt} 56$^{*}$ & 56 & 82 & 81 & 53 & 51 & 4 & 36 & \bf{2} & \bf{3}\\ \hline
University of Alabama & \hspace{4pt} 56$^{*}$ & 58 & 45 & 52 & 58 & 63 & 57 & N/A & 3 & 3\\ \hline
Auburn University & \hspace{4pt} 58$^{*}$ & 62 & 52 & 49 & 52 & 52 & 66 & 32 & 3 & 3\\ \hline
University of Louisville & \hspace{4pt} 58$^{*}$ & 59 & 50 & 57 & 57 & 65 & 58 & 66 & 3 & 3\\ \hline
University of Houston & 60 & 60 & 55 & 62 & 61 & 56 & 77 & 53 & 3 & 3\\ \hline
Air Force Academy & 61 & 63 & 77 & 71 & 63 & 54 & 47 & 46 & 3 & 3\\ \hline
\textbf{Naval Postgraduate School} & 62$^{*}$ & 64 & 75 & 83 & 50 & 60 & 14 & 23 & \bf{1} & \bf{3}\\ \hline
University of Oklahoma & 62$^{*}$ & 65 & 56 & 53 & 48 & 58 & 50 & 46 & 3 & 3\\ \hline
\bf{New Jersey Institute of Technology} & \hspace{4pt} 64$^{*}$ & 77 & 80 & 79 & 62 & 80 & 26 & N/A & \bf{2} & \bf{3}\\ \hline
University of Central Florida & \hspace{4pt} 64$^{*}$ & 66 & 57 & 55 & 59 & 59 & 62 & 39 & 3 & 3\\ \hline
Old Dominion University & 66 & 67 & 76 & 67 & 77 & 74 & 83 & N/A & 3 & 3\\ \hline
University of North Carolina Charlotte & 67 & 68 & 73 & 65 & 75 & 73 & 76 & 58 & 3 & 3\\ \hline
West Virginia University & 68 & 69 & 70 & 63 & 69 & 66 & 64 & 66 & 3 & 3\\ \hline
\textbf{Stevens Institute of Technology} & \hspace{4pt} 69$^{*}$ & 74 & 74 & 61 & 55 & 40 & 17 & 39 & \bf{2} & \bf{3}\\ \hline
University of Miami & \hspace{4pt} 69$^{*}$ & 49 & 68 & 75 & 80 & 77 & 79 & 66 & 3 & 3\\ \hline
University of Texas Arlington & \hspace{4pt} 69$^{*}$ & 75 & 66 & 60 & 70 & 46 & 70 & 58 & 3 & 3\\ \hline
Wichita State University & \hspace{4pt} 69$^{*}$ & 70 & 78 & 69 & 68 & 78 & 56 & 66 & 3 & 3\\ \hline
Ohio University & 73 & 71 & 83 & 68 & 78 & 72 & 80 & 66 & 3 & 3\\ \hline
Florida Institute of Technology & \hspace{4pt} 74$^{*}$ & 76 & 62 & 70 & 83 & 79 & 81 & N/A & 3 & 3\\ \hline
Florida State University & \hspace{4pt} 74$^{*}$ & 72 & 79 & 64 & 74 & 64 & 65 & 66 & 3 & 3\\ \hline
Montana State University & \hspace{4pt} 74$^{*}$ & 82 & 59 & 72 & 82 & 76 & 74 & N/A & 3 & 3\\ \hline
New Mexico State University & \hspace{4pt} 74$^{*}$ & 81 & 63 & 73 & 79 & 75 & 73 & N/A & 3 & 3\\ \hline
North Carolina A\&T State University & \hspace{4pt} 74$^{*}$ & 78 & 71 & 78 & 72 & 81 & 81 & 66 & 3 & 3\\ \hline
Oregon State University & \hspace{4pt} 74$^{*}$ & 73 & 72 & 77 & 66 & 69 & 69 & 46 & 3 & 3\\ \hline
State University of New York Binghamton & \hspace{4pt} 74$^{*}$ & 61 & 81 & 80 & 65 & 67 & 53 & 58 & 3 & 3\\ \hline
University of Arkansas Little Rock & \hspace{4pt} 74$^{*}$ & 83 & 67 & 82 & 81 & 70 & 54 & 46 & 3 & 3\\ \hline
University of Wisconsin Milwaukee & \hspace{4pt} 74$^{*}$ & 79 & 65 & 76 & 76 & 83 & 60 & 58 & 3 & 3\\ \hline
\textbf{Worcester Polytechnic Institute} & \hspace{4pt} 74$^{*}$ & 80 & 64 & 74 & 73 & 82 & 46 & 53 & \bf{2} & \bf{3}\\ \hline
\end{tabular}}
\end{table}

To compare our approach with ERGM results, we divide the BP-FASP ranking of IEOR departments into three similarly sized hierarchical clusters (tiers). Due to a tie between the Univ. of Texas Austin and Lehigh Univ. at rank 14, the clustering assigns the top 13 departments to Tier 1, the next 37 to Tier 2, and the remaining 34 to Tier 3. The tiers of the departments obtained with the two approaches---referred henceforth as ``BP-FASP'' and ``D-C'', respectively---are displayed in Table \ref{Rankings Table Aggregated}; departments placed in the different tiers in the two works are shown in bold.
 
Next, we elaborate on the similarities and differences between the two approaches. Carnegie Mellon Univ., Univ. of California Berkeley, Massachusetts Inst. of Technology, Stanford Univ., Cornell Univ., Princeton Univ., Columbia Univ., and Northwestern Univ. are placed in Tier 1 in both works. However, D-C places the Univ. of Texas Austin, Univ. of Southern California, Univ. of Pennsylvania, Univ. of North Carolina Chapel Hill, Washington Univ. St. Louis and Naval Postgraduate School in Tier 1 as well; all these universities are placed in Tier 2 in BP-FASP, except for Naval Postgraduate School, which is placed in Tier 3. On the other hand, BP-FASP places the Univ. of Michigan Ann Arbor, Georgia Inst. of Technology, Pennsylvania State Univ., Purdue Univ. West Lafayette, and Univ. of Illinois Urbana Champaign in Tier 1, which all are placed in Tier 2 in D-C. Furthermore, D-C places Univ. of Massachusetts Amherst, George Washington Univ., Univ. of Connecticut, George Mason Univ., Northeastern Univ., New Jersey Inst. of Technology, Stevens Inst. of Technology, and Worcester Polytechnic Inst. in Tier 2, whereas they all are placed in Tier 3 in BP-FASP. On the other hand, BP-FASP places the Univ. of Iowa, Univ. of Illinois Chicago, Univ. of South Florida, Univ. of Texas Dallas, Clemson Univ., Kansas State Univ., Texas Tech Univ., Missouri Univ. of Science and Technology, and Univ. of Tennessee Knoxville in Tier 2, all of which are placed in Tier 3 in D-C. 

Fig. \ref{D-C and BP-FASP Clusterings of IEOR departments} depicts the clustering/tiering of IEOR departments in D-C and B-FAS. The total number of faculty placements from Tier 1 to Tiers 2 and 3 in BP-FASP (374 placements) are much higher than D-C (196 placements), which is more desirable from a clustering standpoint, as this indicates a higher contrast in the prestige of Tier 1 departments. Surprisingly, there is a significant difference between the number of placements from Tier 1 to Tier 3 in BP-FASP (133 placements) and D-C (14 placements). In contrast, the total number of placements from Tiers 2 and 3 to Tier 1 in BP-FASP (44 placements) is slightly higher than in D-C (38 placements). Despite being slightly smaller in size, the number of placements within departments in Tier 1 in BP-FASP (241 placements) is greater than those in D-C (199 placements), which is more desirable as prestigious departments tend to hire from other prestigious departments. Note that the tier self-loops displayed in Fig. \ref{D-C and BP-FASP Clusterings of IEOR departments} are the total number of faculty hires between the departments in that tier, including individual departments self-hires.
\begin{figure}[th]
    \centering
    \includegraphics[scale=0.4]{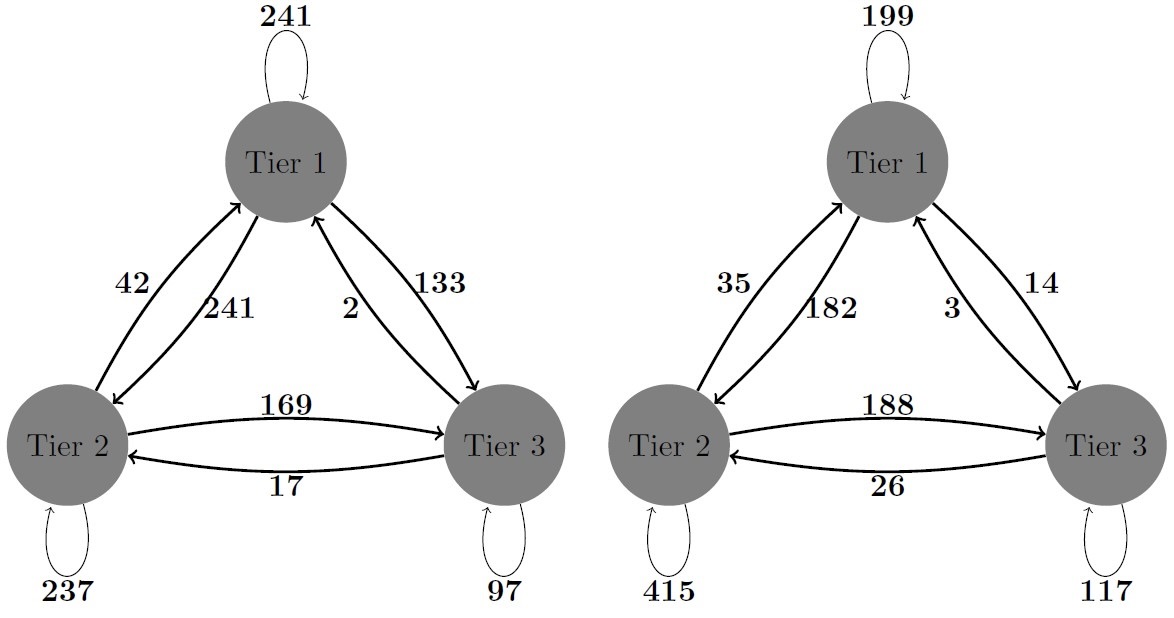}
    \caption{Clustering/tiering of IEOR departments using BP-FASP (left) and D-C (right)}
    \label{D-C and BP-FASP Clusterings of IEOR departments}
\end{figure}

We evaluate these two tierings of IEOR departments based on the distinctiveness of their clusters, which translates into imbalanced hiring and exposes the underlying hierarchical relationships. We use variants of Cut Imbalance (CI), a measure of comparing digraph clustering algorithms \citep{cucuringu2020hermitian}. To introduce them, let $\mathcal{X}$ and $\mathcal{Y}$ be any two disjoint vertex sets. The fundamental CI measure, denoted as $CI^{base}$, is defined as\llV
\begin{small}
\begin{align}
\allowdisplaybreaks
    CI^{base}(\mathcal{X}, \mathcal{Y}) = \frac{1}{2}\frac{\left|w(\mathcal{X}, \mathcal{Y}) - w(\mathcal{Y}, \mathcal{X})\right|}{w(\mathcal{X}, \mathcal{Y}) + w(\mathcal{Y}, \mathcal{X})} \in [0, 1/2],
\end{align}
\end{small}
\noindent where $w(\mathcal{X}, \mathcal{Y}) = \sum_{i \in \mathcal{X}, j \in \mathcal{Y}} w_{ij}$ (the cumulative weights of the arcs going from  $\mathcal{X}$ to $\mathcal{Y}$). In words, $CI^{base}(\mathcal{X}, \mathcal{Y})$ quantifies the imbalance or difference in magnitude between the weight of opposing edges between $\mathcal{X}$ and $\mathcal{Y}$, with $CI^{base}(\mathcal{X}, \mathcal{Y})=1/2$ (resp. $CI^{base}(\mathcal{X}, \mathcal{Y})=0$) indicating that the edges between $\mathcal{X}$ and $\mathcal{Y}$ are completely imbalanced (resp. balanced). Higher cut imbalances are an indicator of more distinctive clusters, which is more desirable. In the extreme case, when the opposing edges have a similar weight, it can be deduced that there are no differences between the clusters.

The size and volume of sets $\mathcal{X}$ and $\mathcal{Y}$ may have an impact on the value of $w(\mathcal{X}, \mathcal{Y})$, potentially making $CI^{base}$ a misleading metric. To mitigate this effect, \citet{cucuringu2020hermitian} proposed two normalized variants, denoted herein as $CI^{size}$ and $CI^{vol}$, which normalize for set size and volume, respectively, to ensure a fair assessment of clustering similarity across sets of varying sizes and volume. \citet{he2022digrac} proposed another variant denoted herein as $CI^{vol\char`_sum}$. The three variants are defined as\llV\lV
\begin{small}
\begin{align}
    &CI^{size}(\mathcal{X}, \mathcal{Y}) = CI(\mathcal{X}, \mathcal{Y}).\min\{\left|\mathcal{X}\right|, \left|\mathcal{Y}\right|\},\\
    &CI^{vol}(\mathcal{X}, \mathcal{Y}) = CI(\mathcal{X}, \mathcal{Y}).\min\{Vol(\mathcal{X}), Vol(Y)\},\\
    &CI^{vol\char`_sum}(\mathcal{X}, \mathcal{Y}) = 2\frac{\left|w(\mathcal{X}, \mathcal{Y}) - w(\mathcal{Y}, \mathcal{X})\right|}{Vol(\mathcal{X}) + Vol(\mathcal{Y})} \in [0,1],
\end{align}
\end{small}
\noindent where $Vol(\mathcal{X})$ is the sum of weighted in-degrees and out-degrees of the vertices in $\mathcal{X}$, excluding self-loop weights. These normalizations help prevent inflated CI values that arise from imbalanced comparisons. In particular, when one of the sets is very small (or has low volume), achieving a high $CI^{base}$ becomes trivially easy, even if the clustering quality is poor. For instance, it is much easier to achieve the maximum $CI^{base}$ when $|\mathcal{X}| = 100$ and $|\mathcal{Y}| = 1$, compared to when $|\mathcal{Y}| = 50$. Similar to $CI^{base}$, the higher the values of these measures, the more distinct the clusters are. 

Table \ref{CI Table} displays different CI measures for BP-FASP and D-C for each pair of tiers and for the overall clustering; the best pairwise and overall CI-measures are shown in bold. BP-FASP not only has a higher overall CI-measure according to each of the four CI-measures, it has a higher value for each pair of tiers, except for CI$^{size}$(Tier 1, Tier 2). The difference in overall CI-values is notable, especially for CI$^{vol}$ and $CI^{vol\_sum}$, where BP-FASP values are 31\% and 41\% higher, respectively.\lV
\begin{table}[thbp]
\centering
\footnotesize
\caption{Various CI measures for the D-C and B-FASP clustering. The highest value for each CI measure and each pair of tiers is shown in bold}
\resizebox{\columnwidth}{!}{
\begin{tabular}{|l|ll|ll|ll|ll|}
\hline
 & \multicolumn{2}{l|}{$CI^{base}$}    & \multicolumn{2}{l|}{$CI^{size}$}  & \multicolumn{2}{l|}{$CI^{vol}$}    & \multicolumn{2}{l|}{$CI^{vol\_sum}$}    \\ \hline
 & \multicolumn{1}{l|}{B-FAS} & \multicolumn{1}{l|}{D-C} & \multicolumn{1}{l|}{B-FAS} & \multicolumn{1}{l|}{D-C} & \multicolumn{1}{l|}{B-FAS} & \multicolumn{1}{l|}{D-C} & \multicolumn{1}{l|}{B-FAS} &  \multicolumn{1}{l|}{D-C}\\ \hline
(Tier 1, Tier 2) & \multicolumn{1}{l|}{\textbf{0.352}} & \multicolumn{1}{l|}{0.339} & \multicolumn{1}{l|}{4.571} & \multicolumn{1}{l|}{\textbf{4.742}} & \multicolumn{1}{l|}{\textbf{281.272}} & \multicolumn{1}{l|}{182.226} & \multicolumn{1}{l|}{\textbf{0.247}} &  \multicolumn{1}{l|}{0.178}\\ \hline
(Tier 1, Tier 3) & \multicolumn{1}{l|}{\textbf{0.485}} & \multicolumn{1}{l|}{0.324} & \multicolumn{1}{l|}{\textbf{6.307}} & \multicolumn{1}{l|}{4.529} & \multicolumn{1}{l|}{\textbf{201.352}} & \multicolumn{1}{l|}{122.618} &  \multicolumn{1}{l|}{\textbf{0.216}} &  \multicolumn{1}{l|}{0.024}\\ \hline
(Tier 2, Tier 3) & \multicolumn{1}{l|}{\textbf{0.409}} & \multicolumn{1}{l|}{0.379} & \multicolumn{1}{l|}{\textbf{13.892}} & \multicolumn{1}{l|}{12.869} & \multicolumn{1}{l|}{\textbf{169.570}} & \multicolumn{1}{l|}{143.453} & \multicolumn{1}{l|}{\textbf{0.248}} &  \multicolumn{1}{l|}{0.217}\\ \hline
Total & \multicolumn{1}{l|}{\textbf{1.246}} & \multicolumn{1}{l|}{1.042} & \multicolumn{1}{l|}{\textbf{24.770}} & \multicolumn{1}{l|}{22.14} & \multicolumn{1}{l|}{\textbf{652.194}} & \multicolumn{1}{l|}{448.297} & \multicolumn{1}{l|}{\textbf{0.711}} &  \multicolumn{1}{l|}{0.419}\\ \hline
\end{tabular}}
\label{CI Table}
\end{table}

We close this section by discussing self-hires, whose exclusion poses a limitation to MVR, BP-FASP, and their variants in analyzing FHN. Arguably, as the prestige of departments increases, the pool of candidates from higher-ranked departments decreases. Logically, when a department hires one of its Ph.D. graduates, this may be considered a sign of high prestige since low-prestige departments have more candidates to choose from compared to high-prestige departments. This argument can be backed by the data in the IEOR case, where the top self-hiring departments are Massachusetts Inst. of Technology and Stanford Univ. with 20 and 10 self-hires, respectively. There are 165 self-hires in the IEOR FHN, out of which 50 self-hires happen within Tier 1, 65 in Tier 2, and 50 in Tier 3 in BP-FASP. On the other hand, 47 self-hires happen within Tier 1, 75 in Tier 2, and 43 in Tier 3 in D-C. As such, BP-FASP's Tier 1 accounts for more self-hires, despite its slightly smaller size; on the other hand, BP-FASP's Tier 3 includes more self-hires. The higher-than-expected number of self-hires in Tier 3 in both works likely stems from a broad definition of IEOR, where only 47.5\% are immediate and/or faculty trained at the same university come from departments of varying prestige.\vspace{-0.2in} 
\vspace{-3pt}
\section{Conclusion}\label{Sec: Conclusion}\vspace{-0.1in}
This paper modifies the traditional feedback arc set problem (FASP) to allow for certain types of cycles to be preserved from digraphs containing bidirected arcs. The traditional feedback set problem aims to find the set of arcs with the least cumulative weight such that removing them results in a directed acyclic graph (DAG). On the other hand, the modified feedback set problem aims to find the set of arcs with the least cumulative weight such that removing them results in a unicycle-free subgraph. This modification allows the associated ranking to contain ties and, more broadly, equivalence classes. To do so, the paper extends the relationship between DAGs and linear orderings to the case of unicycle-free graphs and weak orderings. Moreover, it presents two mathematical formulations and compares their strength via polyhedral analysis. Furthermore, it analyzes the IEOR faculty hiring network. Apart from obtaining the rank of IEOR departments, we cluster the IEOR departments into three tiers, similar to an existing clustering approach based on ERGM. Our results show that our approach results in more distinctive tiers according to the cut imbalance criterion and its variants.

Future work will devise new methodologies to account for self-hires. Moreover, it will consider using different weights for faculty placements at different points in time,  giving recent placements more weight. These adjustments could be useful since faculty placements happen over a wide range of time and also tend to be influenced by the prestige of departments at the time of occurrence. Hence, the observed network is affected by the prestige of the departments over time, and not just in recent years. Approaches for aggregating the rankings obtained with different methodologies (e.g., \citep{cook2006distance} will also be explored. 

The data that support the findings of this study are available from the corresponding author upon reasonable request.\llV\lV
%Although, FHN provides an alternative methodology for determining the prestige/rank of academic departments, multiple rankings can be aggregated into one \emph{consensus ranking} to use the information from all sources. This can be accomplished by ranking aggregations methods such as Kemeny aggregation (e.g., \cite{akbari2023beyond}).

%In an ideal scenario, the faculty hiring network must prioritize the immediate movements of faculty members after their graduation, rather than their current positions. However, most works, including \citet{del2020exponential}, relax this assumption due to data limitations. They collected data based on the current positions of faculty members, neglecting their initial post-graduation positions. Another challenge associated with this approach is that some faculty members may have commenced their careers in lower-ranked departments than their alma mater, but over time, they managed to ascend to higher-ranked departments than their academic alma mater. Apart from the initial post-graduation positions, subsequent movements of the faculties can be included as well. Other educational outputs such as research impact, number/quality of citations, etc., can also be considered. Future studies will update the IEOR dataset or replicate the study for FHNs of other fields.  

% \small
%\Gigantic{9.57pt}\setlength{\bibsep}{2pt}
\bibliographystyle{plainnat}
\bibliography{refs}

\newpage
\appendix
\section*{Appendix}
\section{Depiction of Trivially Unicycle-Free Graphs}\label{Appendix Trivial unicycle-free graphs of size 3}
Each of the shown graphs cannot contain a cycle since it has at most two arcs.
\begin{figure}[H]
\centering
\begin{subfigure}{0.225\textwidth}
\begin{tikzpicture}
\tikzstyle{every node}=[draw, shape=circle];
\path (0, 0cm) node (j) {$j$};
\path (2.2, 0cm) node (k) {$k$};
\path (1.1, 1.6cm) node (i) {$i$};
\draw[line width=1pt,->] 
(j) to (i);
\draw[line width=1pt,->]
(k) to (i);
\end{tikzpicture}
\caption{}
\end{subfigure}
\begin{subfigure}{0.225\textwidth}
\begin{tikzpicture}
\tikzstyle{every node}=[draw, shape=circle];
\path (0, 0cm) node (j) {$j$};
\path (2.2, 0cm) node (k) {$k$};
\path (1.1, 1.6cm) node (i) {$i$};
\draw[line width=1pt,->] 
(i) to (j);
\draw[line width=1pt,->]
(i) to (k);
\end{tikzpicture}
\caption{}
\end{subfigure}
\begin{subfigure}{0.225\textwidth}
\begin{tikzpicture}
\tikzstyle{every node}=[draw, shape=circle];
\path (4, 0cm) node (j) {$j$};
\path (6.2, 0cm) node (k) {$k$};
\path (5.1, 1.6cm) node (i) {$i$};
\draw[line width=1pt,->]
(i) to (j);
\end{tikzpicture}
\caption{}
\end{subfigure}
\begin{subfigure}{0.225\textwidth}
\begin{tikzpicture}
\tikzstyle{every node}=[draw, shape=circle];
\path (4, 0cm) node (j) {$j$};
\path (6.2, 0cm) node (k) {$k$};
\path (5.1, 1.6cm) node (i) {$i$};
\end{tikzpicture}
\caption{}
\end{subfigure}
\caption{Trivial unicycle-free graphs of size three}
\label{Trivial unicycle-free graphs of size 3}
\end{figure}

%\appendix
\section{Proof of Theorem 1}\label{Appendix Proof of Slater's Theorem}
\begin{manualtheorem}{1}
Let $\cW$ define a weak tournament over a set of candidates $\cV$. Define a weak tournament solution $S$ as the alternative(s) in the maximal subset of the $K$-way partition obtained by making the minimum number of majority relation reversals to $\cW$. Then, $S$ is an extension of Slater's rule to weak tournaments.
\end{manualtheorem}
\begin{proof}
When $\cW\in\cT$, either $a\succ b$ or $b\succ a$ (but not both), for all $a,b\in\cV$. Consequently, there are no bidirectional subgraphs in $\cT$, and any $K$-way partition $\boldsymbol{\mathcal{N}}$ will have $|\mathcal{N}_{t}| = 1, \hspace{5pt} \forall \hspace{1pt}\mathcal{N}_{t} \in \boldsymbol{\mathcal{N}}$. Therefore, finding the minimum number of majority relation reversals needed to obtain such a partition and returning its maximal (singleton) subset is equivalent to applying the standard Slater's rule to $\cT$. 
\end{proof}

%\appendix
\section{Proof of Proposition 1} \label{Appendix Proof of Proposition 1}
\begin{manualProposition}{1}
BLP \#1 admits any weak ordering that can be associated with a unicycle-free subgraph within a given digraph $\mathcal{G} = (\mathcal{N}, \mathcal{A})$.
\end{manualProposition}
\begin{proof}
The sets  $\{ y_{ij} \mid y_{ij} + y_{ji} = 1, \quad y_{ij} - y_{ik} - y_{kj} \geq -1, \quad y_{ij} \in \{0, 1\}, \quad \forall i \neq j \neq k \in \mathcal{N}\}$ and $\{ y_{ij} \mid y_{ij} + y_{ji} \geq 1, \quad y_{ij} - y_{ik} - y_{kj} \geq -1, \quad y_{ij} \in \{0, 1\}, \quad \forall i \neq j \neq k \in \mathcal{N}\}$ admit all linear and weak orderings of a given set $\mathcal{N}$, respectively %\citep{yoo2021new} %% Replace after paper is accepted
(note that the second set is a super set of the first). The constraints of BLP \#1 admit any weak ordering that can be associated with any unicycle-free subgraph of $\mathcal{G}$ by simply excluding those that cannot associated; that is, by imposing a strict ordering on the those pairs of nodes for which no bidirectional arc exists in $\mathcal{G}$.
\end{proof}

\section{Proof of Proposition 2} \label{Appendix Proof of transitivity}
\begin{manualProposition}{2}
The transitivity requirement can be imposed by Constraints (2).
\end{manualProposition}
\begin{proof}
Consider an arbitrary item-triplet $(i, j, k)$. By examining all three unicycles in Fig. 3, we observe a common pattern: there are arcs $(i, j)$, $(j, k)$, and $(k, i)$, but not no arc $(j, i)$. In terms of our variables, there is a unicycle in the solution if $y_{ij}=y_{jk}=y_{ki} = 1, y_{ji}=0$. We show that Constraints (2) are violated with such a setting. Note that we do not assume any values for the $y_{ik}$ and $y_{kj}$ variables, as their exact values determine different type of unicycles shown in Fig. 4. More specifically, $y_{ik} = y_{kj} = 0$, $y_{ik} = y_{kj} = 1$, and $y_{ik} = 0, y_{kj} = 1$ induce different cycles.

To evaluate the transitivity between the three items, consider Constraint (2a) for the item-pair $(k, i)$, Constraint (2b) for the item-pair $(j, k)$, and Constraint (2c) for the item-pair $(j, i)$. We have
\allowdisplaybreaks
\begin{subequations}\label{eq: 9261}
\begin{align}
    &\sum_{u \in \mathcal{N}\backslash \{i, k\}}(y_{ui} - y_{uk}) \geq (n - 2)(y_{ki} - 1),  \label{eq: 261}\\
    &\sum_{u \in \mathcal{N}\backslash \{j, k\}}(y_{ju} - y_{ku}) \geq (n - 2)(y_{jk} - 1), \label{eq: 262}\\
    &\sum_{u \in \mathcal{N}\backslash \{i, j\}}(y_{ju} + y_{ui}) \leq (n - 2)(y_{ji} + 1).\label{eq: 263}
\end{align}
\end{subequations}
Constraints. \eqref{eq: 261}-\eqref{eq: 263} can be rewritten as:
\allowdisplaybreaks
\begin{subequations}\label{eq: 4361}
\begin{align}
( y_{ji} - y_{jk}) + \left(\sum_{u \in \mathcal{N}\backslash \{i, j, k\}}y_{ui} - y_{uk}\right) \geq (n - 2)(y_{ki} - 1),\label{eq: 361}\\
(y_{ji} - y_{ki}) + \left(\sum_{u \in \mathcal{N}\backslash \{i, j, k\}}y_{ju} - y_{ku}\right) \geq (n - 2)(y_{jk} - 1),\label{eq: 362}\\
(y_{jk} + y_{ki}) + \left(\sum_{u \in \mathcal{N}\backslash \{i, j, k\}}y_{ju} + y_{ui}\right) \leq (n - 2)(y_{ji} + 1)\label{eq: 363}.
\end{align}
\end{subequations}

The current settings of variables $y_{ij}, y_{jk}, y_{ki},$ and $y_{ji}$ without assuming any specific values for the remaining variables, yields
\allowdisplaybreaks
\begin{small}
\begin{subequations}
\begin{align*}
    (0 - 1) + \left(\sum_{u \in \mathcal{N}\backslash \{i, j, k\}}y_{ui} - y_{uk}\right) \geq (n - 2)(1 - 1),\\
    (0 - 1) + \left(\sum_{u \in \mathcal{N}\backslash \{i, j, k\}}y_{ju} - y_{ku}\right) \geq (n - 2)(1 - 1),\\
    (1 + 1) + \left(\sum_{u \in \mathcal{N}\backslash \{i, j, k\}}y_{ju} + y_{ui}\right) \leq (n - 2)(0 + 1).
\end{align*}
\end{subequations}
\end{small}
Simplifying the above expressions yields the linear inequalities 
\allowdisplaybreaks
\begin{small}
\begin{subequations}\label{equation 643}
\begin{align}
    &\sum_{u \in \mathcal{N}\backslash \{i, j, k\}}y_{ui} \geq 1 + \sum_{u \in V\backslash \{i, j, k\}}y_{uk},\label{eq: 23}\\
    &\sum_{u \in \mathcal{N}\backslash \{i, j, k\}}y_{ju} \geq 1 + \sum_{u \in V\backslash \{i, j, k\}}y_{ku},\label{eq: 24}\\
    &\sum_{u \in \mathcal{N}\backslash \{i, j, k\}}(y_{ju} + y_{ui}) \leq n - 4 \label{eq: 25}.
\end{align}
\end{subequations}
\end{small}
We show that this set of linear inequalities cannot be satisfied by adding \eqref{eq: 23} and \eqref{eq: 24}, which gives
\allowdisplaybreaks
\begin{small}
\begin{subequations}
\begin{align}
    \sum_{u \in \mathcal{N}\backslash \{i, j, k\}}(y_{ju} + y_{ui}) &\geq  2 + \sum_{u \in \mathcal{N}\backslash \{i, j, k\}}(\underbrace{y_{uk} + y_{ku}}_\text{$\geq 1$}) \geq   2 + (n-3) = n-1,\label{eq: 19}
\end{align}
\end{subequations}
\end{small}
\noindent where inequality \eqref{eq: 19} leverages Constraints (1b)-(1c).
Inequality \eqref{eq: 19} contradicts inequality \eqref{eq: 25}, hence, the solution is infeasible. Thereby, we can conclude that Constraints (2) prevent unicycles and enforce the transitivity relation in BLP \#1. 
\end{proof}

\section{Proof of Theorem 2} \label{Appendix Proof of Polyhedral Theorem}
\begin{manualtheorem}{1}
Let $\mathcal{P}^1$ and $\mathcal{P}^2$ denote the LP-relaxed regions of BLPs \#1 and \#2, respectively. For any instance of BP-FASP, $\mathcal{P}^1 \subseteq \mathcal{P}^2$, and this inclusion can be strict.
\end{manualtheorem}
\begin{proof}
We prove that $\mathcal{P}^1 \subseteq \mathcal{P}^2$ by showing that every feasible solution to BLP \#1 is also feasible to BLP \#2. Note that BLPs \#1 and \#2 only differ in their preference transitivity constraints; hence, we only focus on comparing these constraints. Let $\boldsymbol{y}^{(1)} \in \mathcal{P}^{1}$ and $\boldsymbol{y}^{(2)} \in \mathcal{P}^{2}$ denote solutions that are feasible for BLP \#1 and BLP \#2, respectively.

To show that $\boldsymbol{y}^{(1)} \in \mathcal{P}^{2}$, we re-organize the terms of Constraints (2a)-(2c) to obtain the left-hand side expression of Constraint (1d). Afterward, we compare the right-hand side values. First, consider Constraint (2a) for item-pair $(k, j)$. This constraint is re-organized as follows:
\allowdisplaybreaks
\begin{small}
\begin{align*}
\Rightarrow \sum_{u \in V\backslash \{j, k\}}(y_{uj}^{(1)} - y_{uk}^{(1)}) & \geq (n - 2)(y_{kj}^{(1)} - 1) \\
\Rightarrow (y_{ij}^{(1)} - y_{ik}^{(1)}) + \sum_{u \in V\backslash \{i, j, k\}}(y_{uj}^{(1)} - y_{uk}^{(1)}) & \geq (n - 2)(y_{kj}^{(1)} - 1)\\
\Rightarrow y_{ij}^{(1)} - y_{ik}^{(1)}  - y_{kj}^{(1)} & \geq (n - 3)y_{kj}^{(1)} - (n - 2) + \sum_{u \in V\backslash \{i, j, k\}}(y_{uk}^{(1)} - y_{uj}^{(1)}),\\&
= \underbrace{- (n - 2)  + \underbrace{\sum_{u \in V\backslash \{i, j, k\}} \underbrace{(y_{kj}^{(1)} + y_{uk}^{(1)} - y_{uj}^{(1)}}_\text{$\leq 1$})}_\text{$\leq n-3$}}_\text{$\leq -1$}.
\end{align*}
\end{small}
The last inequality leverages the fact $\boldsymbol{y}^{(1)}$ satisfies Constraint (1d). Since the maximum value of the right-hand side of the last equation is less than or equal to -1, we can conclude that every feasible solution for Constraint (1d) is also feasible for Constraint (2a). The proofs for Constraints (2b)-(2c) are similar and are thereby omitted for succinctness. Therefore, we can conclude that $\boldsymbol{y}^{(1)} \in \mathcal{P}^{2}$, and hence, $\mathcal{P}^1 \subseteq \mathcal{P}^2$. 

To show that the inclusion $\mathcal{P}^{1} \subseteq \mathcal{P}^{2}$ can be strict (i.e., $\mathcal{P}^{1} \subset \mathcal{P}^{2}$), we find a point in $P_{2}\backslash P_{1}$ (a point that satisfies Constraint (2a), but not Constraint (1d)). Consider the small graph shown in Fig. \ref{An example graph where }
\begin{figure}[ht]
    \centering
    \includegraphics[scale=0.6]{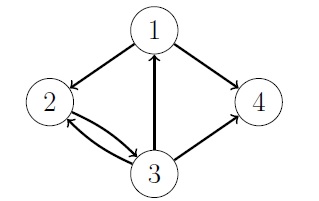}
    \caption{An example graph for which $\mathcal{P}^1 \subset \mathcal{P}^2$}
    \label{An example graph where }
\end{figure}
For this instance, consider the solution $\boldsymbol{y}^{(2)} \in \mathcal{P}^{2}$ as 
\begin{small}
\begin{align*}
\allowdisplaybreaks
&y_{14}^{(2)} = y_{24}^{(2)} = y_{34}^{(2)} = 1, \hspace{15pt}
y_{41}^{(2)} = y_{42}^{(2)} = y_{43}^{(2)} = 0, \hspace{15pt} y_{12}^{(2)} = 0.1,\\& 
y_{21}^{(2)} = 0.9, \hspace{10pt}y_{23}^{(2)} = 0.4, \hspace{10pt} y_{32}^{(2)} = 0.6, \hspace{10pt} y_{13}^{(2)} = 0.6, \hspace{10pt} y_{31}^{(2)} = 0.4.
\end{align*}
\end{small}
Note that these values are fractional as they form a solution to the LP-relaxation of BLP \#2. By inspection, this solution satisfies all constraints of BLP \#2. However, we have that
\begin{small}
\begin{align*}
    y_{12}^{(2)} - y_{32}^{(2)} - y_{13}^{(2)} = 0.1 - 0.6 -0.6 = -1.1 \ngeq -1,
\end{align*}
\end{small}
which indicates that this solution does not satisfy the preference transitivity constraints of BLP \#1. 
\end{proof} 
\end{document}